\newtheorem{theorem}{Theorem}
\newtheorem{definition}[theorem]{Definition}
\newtheorem{lemma}[theorem]{Lemma}
\theoremstyle{remark}
\newtheorem{myremarks}[theorem]{Remarks}
\newcounter{numcount}
\newcommand{\labelnummer}{\mbox{\normalfont (\roman{numcount})}}%
\newenvironment{nummer}%
  {\let\curlabelspeicher\@currentlabel%
    \begin{list}{\labelnummer}%
      {\usecounter{numcount}\leftmargin0pt%
        \topsep0.5ex\partopsep2ex\parsep0pt\itemsep0ex\@plus1\p@%
        \labelwidth2.5em\itemindent3.5em\labelsep1em%
      }%
    \let\saveitem\item%
    \def\item{\saveitem%
      \def\@currentlabel{{\upshape\curlabelspeicher}$\,$\labelnummer}}%
    \let\savelabel\label%
    \def\label##1{\savelabel{##1}%
      \@bsphack%
        \ifmmode\else%
          \protected@write\@auxout{}%
          {\string\newlabel{##1item}{{\labelnummer}{\thepage}}}%
        \fi%
      \@esphack%
    }%
  }{\end{list}}%
\renewcommand{\appendix}{\def\thesection{\textsc{Appendix}}}
 \let\leq\le
 \let\geq\ge
\DeclareMathOperator{\tr}{tr\kern1pt}
\newif\ifper\pertrue
\def\per{.}
\def\bti{\@ifnextchar[\bbti\bbbti}
\def\bbti[#1]#2{#2, #1.}
\def\bbbti#1{#1.}
\def\z{\@ifnextchar[\zz\zzz}
\def\zz[#1]#2#3#4#5{\perfalse\emph{#2} \textbf{#3}, #4 (#5) [#1]}
\def\zzz#1#2#3#4{\emph{#1} \textbf{#2}, #3 (#4)\ifper\per\fi\pertrue}
\def\pub{\@ifstar\pubstar\pubnostar}
\def\pubnostar{\@ifnextchar[\@@pubnostar\@pubnostar}
\def\@@pubnostar[#1]#2#3#4{#2, #3, #4, #1\ifper\per\fi\pertrue}
\def\@pubnostar#1#2#3{#1, #2, #3\ifper\per\fi\pertrue}
\def\pubstar[#1]#2#3#4{\perfalse #2, #3, #4 [#1]\pertrue}
\newcommand{\beq}{\begin{equation}}
\newcommand{\eeq}{\end{equation}}
\newcommand{\ba}{\begin{array}}
\newcommand{\ea}{\end{array}}
\newcommand{\bea}{\begin{eqnarray}}
\newcommand{\eea}{\end{eqnarray}}
\newcommand{\R}{\mathbb{R}}
\newcommand{\Z}{\mathbb{Z}}
\newcommand{\Zd}{\mathbb{Z}^d}
\newcommand{\N}{\mathbb{N}}
\DeclareMathOperator{\supp}{\mathrm{supp}}
\def\P{I\kern-.30em{P}}
\def\E{I\kern-.30em{E}}
\renewcommand{\E}{\mathbb{E}\mkern2mu}
\renewcommand{\P}{\mathbb{P}}
\begin{document}

\title[Density of states for $N$-body random operators]{Optimal Wegner estimate and the density of states for $N$-body, interacting
Schr\"odinger operators with random potentials}

\author[P.\ D.\ Hislop]{Peter D.\ Hislop}
\address{Department of Mathematics,
    University of Kentucky,
    Lexington, Kentucky  40506-0027, USA}
\email{peter.hislop@uky.edu}

\author[F.\ Klopp]{Fr\'ed\'eric Klopp}
\address{Institut de Math\'ematiques de Jussieu  \\
Universit\'e Pierre et Marie Curie  \\
Case 186 \\
4 place Jussieu \\
F-75252 Paris cedex 05, FRANCE}
\email{klopp@math.jussieu.fr}

\thanks{PDH was partially supported by NSF through grants DMS-0803379 and DMS-1103104}

\begin{abstract}
 We prove an optimal one-volume Wegner estimate for interacting systems of $N$ quantum particles moving in the presence of random potentials.
 The proof is based on the
  scale-free unique continuation principle recently developed for the 1-body problem by Rojas-Molina and Veseli\`c \cite{RM-V1} and extended to spectral projectors by Klein \cite{klein1}. These results extend of our previous results in \cite{CHK:2003,CHK:2007}. We also prove a two-volume Wegner estimate as introduced in \cite{chulaevsky-suhov1}.
  The random potentials are generalized Anderson-type potentials in each variable with minimal conditions on the single-site potential aside from positivity.
  Under additional conditions, we prove the Lipschitz continuity of the integrated density of states (IDS)
  This implies the existence and local
  finiteness of the density of states.
  We also apply these techniques to interacting $N$-particle Schr\"odinger operators
   with Delone-Anderson type random external potentials.
\end{abstract}

\maketitle \thispagestyle{empty}


\section{Statement of the Problem and Result}\label{sec:introduction}

We consider $N$ quantum particles, each moving in $d$ dimensions,
interacting through a bounded potential $U(x_1, \ldots, x_N)$. For example, in
the $\Z^d$-ergodic case, the inter-particle interaction $U$
may be a pair-potential depending on
the difference $x_i - x_j$ of the coordinates $(x_1, \ldots, x_N) \in \R^{Nd}$.
We define the unperturbed $N$-body interacting Hamiltonian $H_{0,N}$ to be
\beq\label{eq:hamiltonian1}
H_{0,N} = - \sum_{j=1}^N  \Delta_j + U(x_1, \ldots, x_N).
\eeq

Each particle moves under the influence of a generalized random Anderson-type potential $V_\omega^{(1)} (x_i)$ defined as follows. Let $\Lambda_L (z) \subset \R^d$ be a $d$-dimensional cube of side length $L>0$ centered at $z \in \R^d$. We consider a family of points $y_j \in \Lambda_1 ( j) \subset \R^d$, with $j \in \Z^d$, and a nonnegative single site potential $u (x) \geq 0$, with $u \in L_0^\infty (\R^d)$.  Associated with each $j \in \Z^d$, there is a random variable $\omega_{j}$. The generalized Anderson-type one-body
random potential $V_\omega^{(1)} (x_i)$ is defined by
 \beq\label{eq:hamiltonian2.1}
V_\omega^{(1)} (x_i ) = \sum_{j \in \Zd} \omega_{j} u(x_i -y_j) .
\eeq
Such random potentials are called crooked Anderson-type random potentials in \cite{klein1}.
The full $N$-body random Hamiltonian is
\beq\label{eq:hamiltonian2}
H_{\omega,N} = H_{0, N} + \sum_{i=1}^N V_\omega^{(1)} (x_i) ,
\eeq
where $V_\omega (x_i)$ is defined in \eqref{eq:hamiltonian2.1}.

The random variables $\{ \omega_j \}$ form a family of independent random variables with
Levy concentration $s$ defined in \eqref{eq:levy1}. In the $\Z^d$-ergodic case, these independent
random variables are assumed to be identically distributed.
The single-site potential $u \geq 0$ satisfies $\chi_{\Lambda_\ell (0)} (x) \leq u (x) \leq 1$, for $0 < \ell \leq 1$ and $x \in \R^d$.
We are mostly concerned with the case when $u$ has small support ${\rm supp} ~u \subset \Lambda_1 (0)$. When the covering condition $u(x) \geq \chi_{\Lambda_1(0)}$ is satisfied, Theorem \ref{thm:main1} was proved in \cite{klopp-zenk2009}.

Our first main theorem concerns the restriction of the Hamiltonian $H_{\omega,N}$ to $N$-particle cubes in $\R^{Nd}$ defined as follows. As above, let $\Lambda_L(j) \subset \R^d$ be a cube of side length $L > 0$ centered at the point $j \in \Z^d$.
An $N$-particle rectangle is a product region in $\R^{Nd}$ of the form
\beq\label{eq:n-rect1}
\Lambda = \Lambda_{L_1}(j_1) \times \Lambda_{L_2}(j_2) \times \cdots \times \Lambda_{L_N} (j_N),
\eeq
for lengths $L_i > 0$, with $L_i \in \N$, and centers $j_i \in \Z^d$.
A special case is an $N$-particle cube when $L=L_i$, for all $i = 1, \ldots, N$. We write $\Lambda_L \subset \R^{Nd}$.
We will assume all $N$-particle rectangles are centered at the origin with sides parallel to the principal axes.
We denote by $H_{\omega,N}^\Lambda$ the restriction of $H_{\omega,N}$ to $\Lambda$ with Dirichlet or periodic boundary conditions on the boundary $\partial \Lambda$.
Ergodicity is not required for the following theorem on the distribution of the eigenvalues of local Hamiltonian $H_{\omega,N}^\Lambda$.

\begin{theorem}\label{thm:main1}
Let $\Lambda_L \subset \R^{Nd}$ be an $N$-particle cube with $L > 72 \sqrt{Nd}$, with $L$ an odd integer. Let $I = [ I_-, I_+ ] \subset \R$ be an energy interval
with $|I|$ sufficiently small (see Theorem \ref{thm:sfucp1}) and contained in $(- \infty, E_0]$, for any $E_0 > 0$ fixed.
There exists a constant $0 < C(E_0, d, N, u, U) < \infty$ so that
\beq\label{eq:wegner1}
\P \{ \sigma (H_{\omega,N}^\Lambda) \cap I \neq \emptyset \} \leq C(E_0, d, N,u,U) s(|I|) |\Lambda| .
\eeq
\end{theorem}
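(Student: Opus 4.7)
The plan is to combine Klein's scale-free unique continuation principle (SFUCP) for spectral projectors (Theorem~\ref{thm:sfucp1}), applied in the $Nd$-dimensional ambient space of $H_{\omega,N}^\Lambda$, with a slice-by-slice spectral averaging argument that exploits the product structure of the equidistributed cover. The starting point is the standard reduction
\begin{equation*}
\P\bigl\{\sigma(H_{\omega,N}^\Lambda) \cap I \neq \emptyset\bigr\} \leq \E\bigl[\tr\,\chi_I(H_{\omega,N}^\Lambda)\bigr],
\end{equation*}
which turns the task into proving $\E[\tr\,\chi_I(H_{\omega,N}^\Lambda)] \leq C s(|I|)\,|\Lambda|$.

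Viewing $H_{\omega,N}^\Lambda$ as an ordinary Schr\"odinger operator on $\Lambda\subset\R^{Nd}$ with bounded potential $U + \sum_i V_\omega^{(1)}(x_i)$ of norm at most $\|U\|_\infty + N$, I would apply Theorem~\ref{thm:sfucp1} in dimension $Nd$, taking as equidistributed centers the tensor-product points $\tilde y_p := (y_{j_1},\ldots,y_{j_N})$ indexed by $p = (j_1,\ldots,j_N) \in \Z^{Nd}\cap\Lambda$; each $\tilde y_p$ lies in the unit cube $\Lambda_1(j_1)\times\cdots\times\Lambda_1(j_N) = \Lambda_1(p) \subset \R^{Nd}$, so the equidistribution hypothesis is satisfied. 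This produces a constant $\kappa = \kappa(E_0,d,N,\|U\|_\infty,\delta) > 0$ with
\begin{equation*}
\chi_I(H_{\omega,N}^\Lambda)\, W_\delta^{(\Lambda)}\, \chi_I(H_{\omega,N}^\Lambda) \geq \kappa\, \chi_I(H_{\omega,N}^\Lambda), \qquad W_\delta^{(\Lambda)} := \sum_{p \in \Z^{Nd}\cap\Lambda} \chi_{B_\delta(\tilde y_p)},
\end{equation*}
whence, by cyclicity of the trace, $\tr\,\chi_I(H_{\omega,N}^\Lambda) \leq \kappa^{-1}\tr\bigl(W_\delta^{(\Lambda)}\,\chi_I(H_{\omega,N}^\Lambda)\bigr)$.

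To bound $\E[\tr(W_\delta^{(\Lambda)}\chi_I)]$, I would exploit the elementary inclusion $B_\delta(\tilde y_p) \subset \{x\in\R^{Nd} : |x_1 - y_{j_1}| \leq \delta\}$, which yields the pointwise estimate $\chi_{B_\delta(\tilde y_p)}(x) \leq \chi_{B_\delta(y_{j_1})}(x_1)$. Fixing $\delta \leq \ell/2$, the chain of inequalities $\chi_{B_\delta(y_{j_1})}(x_1) \leq \chi_{\Lambda_\ell(y_{j_1})}(x_1) \leq u(x_1-y_{j_1}) \leq \sum_i u(x_i - y_{j_1}) = \partial H_{\omega,N}^\Lambda/\partial\omega_{j_1}$ shows that $\chi_{B_\delta(y_{j_1})}(x_1)$ is dominated by the $\omega_{j_1}$-derivative of the Hamiltonian. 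Single-variable spectral averaging in $\omega_{j_1}$ in the Combes--Hislop--Klopp form from \cite{CHK:2003,CHK:2007} then yields, for any suitable smoothed majorant $\xi_I \geq \chi_I$,
\begin{equation*}
\E\bigl[\tr\bigl(\chi_{B_\delta(y_{j_1})}(x_1)\,\xi_I(H_{\omega,N}^\Lambda)\bigr)\bigr] \leq C(N,u)\, s(|I|),
\end{equation*}
uniformly in $j_1$ and in the remaining randomness. Summing this bound over all $O(L^{Nd})$ tuples $p = (j_1,\ldots,j_N)$ and inserting into the SFUCP inequality gives $\E[\tr\,\chi_I(H_{\omega,N}^\Lambda)] \leq C\, s(|I|)\, L^{Nd} = C\, s(|I|)\,|\Lambda|$, as required.

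The main obstacle to anticipate is the apparent dimensional mismatch: the Hamiltonian lives on a configuration space of measure $L^{Nd}$ but is perturbed by only $O(L^d)$ independent random variables, each of whose contribution to the potential is constant along $(N-1)d$ ambient directions. The resolution has two parts. First, Theorem~\ref{thm:sfucp1} only sees the underlying Schr\"odinger structure in $\R^{Nd}$, so it applies regardless of how the potential is built; its hypotheses are met by the tensor-product grid of centers, producing $L^{Nd}$ centers in all. Second, each product-form ball indicator in the SFUCP cover is cheaply dominated by a single-particle indicator, which in turn is dominated by $\partial H_{\omega,N}^\Lambda/\partial\omega_{j_i}$ for some $i$, so one-body spectral averaging handles each term individually. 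The eventual factor $L^{Nd}$ thus arises from the $L^{d(N-1)}$-fold redundancy built into the product cover, multiplied by the $L^d$ independent couplings that the cover genuinely sees.
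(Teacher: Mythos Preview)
Your argument has a genuine gap at the spectral-averaging step. You assert that
\[
\E\bigl[\tr\bigl(\chi_{B_\delta(y_{j_1})}(x_1)\,\xi_I(H_{\omega,N}^\Lambda)\bigr)\bigr]\le C(N,u)\,s(|I|)
\]
with $C(N,u)$ independent of $L$, but this does not follow from the Combes--Hislop--Klopp results you invoke. The spectral-averaging lemma of \cite{CHK:2007} (restated here as Lemma~\ref{lemma:sp-ave1}) has the form $\E[\tr(E_\Lambda^{(N)}(\Delta)\,\Phi_j K\Phi_\ell)]\le 8\,s(|\Delta|)\,\|\Phi_j K\Phi_\ell\|_1$: a trace-class operator $K$ must already be present, and its trace norm appears on the right. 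After a single application of the sfUCPSP you have no such $K$; you are trying to bound $\tr(\chi_A\,\chi_I(H))$ with $\chi_A$ a bare multiplication operator. Replacing the $Nd$-dimensional ball indicator $\chi_{B_\delta(\tilde y_p)}$ by the slab indicator $\chi_{B_\delta(y_{j_1})}(x_1)$ makes this worse: the slab has support of volume $\sim L^{(N-1)d}$ inside $\Lambda$, and already a Weyl count shows that $\tr(\chi_{B_\delta(y_{j_1})}(x_1)\,\chi_I(H))$ is of order $L^{(N-1)d}$, so no $L$-independent bound is possible. The more elementary eigenvalue-variation estimate $\int\tr(B_j\,\chi_I(H_t))\,dt=\sum_k|[E_k(0),E_k(1)]\cap I|$ likewise only yields $|I|$ times the number of eigenvalue branches meeting $I$, which is again $O(|\Lambda|)$; summing over your $L^{Nd}$ product centers would then produce $|\Lambda|^2$ rather than $|\Lambda|$.

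The paper's proof is designed precisely to supply the missing trace-class structure. It applies the sfUCPSP \emph{twice} and inserts the resolvent, obtaining
\[
\E\bigl[\tr E_\Lambda^{(N)}(\Delta)\bigr]\le (E_0+M)(N\gamma_N^{2})^{-2}\,\E\bigl[\tr\bigl(E_\Lambda^{(N)}(\Delta)\,\tilde V_\Lambda^{(N)}(H_{0,N}^\Lambda+M)^{-1}\tilde V_\Lambda^{(N)}\bigr)\bigr],
\]
so that $(H_{0,N}^\Lambda+M)^{-1}$ plays the role of $K$. Only then can one expand $\tilde V^{(N)}R_0\tilde V^{(N)}$ via partitions of unity into localized pieces $\Phi_{s(m_i)}R_0\Phi_{t(k_j)}$ with controlled trace norms, feed each into Lemma~\ref{lemma:sp-ave1}, and sum to the correct order $|\Lambda|$. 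Even this requires a further bootstrap (the non-disjoint-support iteration leading to \eqref{eq:sum-notdisjoint-supp1}--\eqref{eq:main1-0}), because a single power of the resolvent is not trace class when $Nd>2$; a small multiple of $\E[\tr E_\Lambda^{(N)}(\Delta)]$ reappears on the right and must be absorbed into the left-hand side. Your single-application-plus-domination scheme bypasses both of these mechanisms, and it is exactly these that convert the operator-norm spectral averaging into an optimal-volume trace bound.
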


Klopp and Zenk \cite{klopp-zenk2009} proved a version of Theorem \ref{thm:main1}
under the more restrictive assumptions that the single-site potential appearing in
\eqref{eq:hamiltonian2.1} satisfies a covering condition $u(x) > c \chi_{\Lambda_1 (0)}(x) > 0$ on the unit cube and that $y_j = j$ for each $j$.
The main result of this note is to remove both this covering condition using the scale-free unique continuation principle for spectral projectors (sfUCPSP), and the constraint that the single-site potentials be located at integer points.
We mention that it is possible to obtain the results for the regular case $y_j = j$ using the methods of \cite{CHK:2007}.
We remark that this theorem applies to the $N$-body Delone-Anderson model, see section \ref{sec:delone1}.

Recently, Klein and Nguyen \cite[Appendix B]{klein-nguyen2} extended Theorem 2.1 of \cite{klein1} to $N$-particle rectangles with arbitrary side lengths. This implies and extension of Theorem \ref{thm:main1} to arbitrary $N$-particle rectangles
provided the minimum side length ${\rm min}_{j \in \{ 1, \ldots, N\}} ~L_j$ is sufficiently large, see \cite[section 2]{klein-nguyen2}.
They mention that it suffices to take the expectation in \eqref{eq:trace1} only with respect to the random
variables in one sub-cube $\Lambda_L \subset \R^d$ (see the proof of Theorem \ref{thm:two-region1} below).


We now turn to a discussion of the integrated density of states (IDS).
We assume that the Hamiltonian $H_{\omega,N}$ is $\Z^d$-ergodic. By this we mean the following. For $k \in \Z^d$, we set $k(N) \equiv (k, k, \ldots, k) \in \Z^{Nd}$. Let $U_{k(N)}$ be the unitary operator on $L^2 (\R^{Nd})$ defined by $(U_{k(N)} f )(x_1, \ldots, x_N)
= f(x_1 + k, \ldots, x_N + k)$. We demand that $U_{k(N)} H_{\omega,N} U_{k(N)}^* = H_{ \tau_{k(N)} \omega, N}$.
In order to insure this covariance, we may require that
\begin{enumerate}
\item[(A1)] The random variables in \eqref{eq:hamiltonian2.1} are independent and identically distributed (iid).
\item[(A2)] The inter-particle interaction $U$ is nonnegative and
 translationally invariant, for example, given by a sum of pair-interaction terms
\beq\label{eq:pair-interact1}
U(x_1, \ldots, x_N) = \sum_{1 \leq j < k \leq N} {\tilde U}(x_j - x_k) ,
\eeq
where $\tilde{U}: \R^d \rightarrow \R^+$ is a bounded, nonnegative function tending to zero at infinity.
\item[(A3)] The regularity property in \eqref{eq:hamiltonian2.1} that $y_j = j \in \Z^d$.
\end{enumerate}

\vspace{.1in}
\noindent
{\it Remark:} We may generalize $H_{0, N}$ by adding a background potential $V_0(x_1, \ldots, x_N)$ provided it is bounded and $\Z^d$-periodic in each variable.

\vspace{.1in}

In this $\Z^d$-ergodic setting, the IDS $N^{(1)}(E)$
for each single particle Hamiltonian $H_{\omega,1} = - \Delta + V_\omega$ exists (see, for example, \cite{CHK:2007}). Let $\nu_1$ be the corresponding density of states (DOS) measure.  The non-interacting $N$-body Hamiltonian
is $H_{\omega, N}^{ni} =  \sum_{i=1}^N ( -\Delta_i + V_\omega (x_i))$. We restrict this Hamiltonian to cubes $\Lambda_L(a) = \Lambda_L(a_i) \times \cdots \Lambda_L(a_N)$, where $a = (a_1, \ldots, a_N) \in \Z^{Nd}$. Let
$N_\Lambda^{(N), ni}(E)$ be the number of eigenvalues of the local Hamiltonian $H_{\omega,N}^{ni, \Lambda}$ less than or equal to $E \in \R$.
Klopp and Zenk \cite[section 2.1]{klopp-zenk2009} proved that the
\beq\label{eq:nonint-dos1}
N^{(N)}_0 (E) \equiv \lim_{L \rightarrow \infty} \frac{1}{|\Lambda_L|} N_{\Lambda_L}^{(N),ni} (E) ,
\eeq
exists for any sequence of $N$-particle boxes $\Lambda_L(a)$ described above. Furthermore,
they proved that $N^{(N)}_0(E)$, as defined in \eqref{eq:nonint-dos1}, is equal to
\beq\label{eq:ids4}
N^{(N)}_0 (E) = ( N^{(1)} \ast \nu_1 \ast \nu_1 \cdots \ast \nu_1 )(E), 
\eeq
with the convolution taken $N$-times. This monotone increasing function exists almost surely. Klopp and Zenk actually proved
\eqref{eq:nonint-dos1} and \eqref{eq:ids4} for more general potentials.

Because the interparticle interaction is nonnegative and supported on a lower dimensional manifold,
Klopp and Zenk proved that it does not change the IDS. As for the noninteracting Hamiltonian, let
$N_\Lambda^{(N)}(E)$ be the number of eigenvalues of the local Hamiltonian $H_{\omega,N}^\Lambda$ less than or equal to $E \in \R$.
Then, the following limit exists
\beq\label{eq:ids1}
N^{(N)}(E) \equiv \lim_{|\Lambda| \rightarrow \infty} \frac{N_\Lambda^{(N)} (E) }{|\Lambda|},
\eeq
and equals the IDS for the noninteracting Hamiltonian, that is, $N^{(N)}(E) = N^{(N)}_0(E)$. In \eqref{eq:ids1},
the regions $\Lambda \subset \R^{Nd}$ are cubes centered at any point in $\Z^{Nd}$ as for the noninteracting case.




\begin{theorem}\label{cor:main2}
In addition to the hypotheses (A1)--(A3), we assume that the iid random variables $\{ \omega_j\}$ have a common
probability distribution that is absolutely continuous with a bounded density.
Then, the integrated density of states $N^{(N)}(E)$ for $H_{\omega,N}$, defined in \eqref{eq:ids1}, is locally uniformly Lipschitz continuous.
The density of states exists and is locally bounded.
\end{theorem}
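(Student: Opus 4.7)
The plan is to reduce local Lipschitz continuity of the $N$-body IDS to the well-known single-particle Lipschitz property via the convolution identity \eqref{eq:ids4}, which is available under (A1)--(A3).

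Step one: Lipschitz continuity of $N^{(1)}$. Because the common distribution of the $\{\omega_j\}$ is absolutely continuous with bounded density $\rho$, its L\'evy concentration satisfies $s(t)\le \|\rho\|_\infty\,t$. The probability bound \eqref{eq:wegner1} alone is insufficient to control $\E[\tr\chi_I(H^\Lambda_{\omega,1})]$, but in the one-particle setting the scale-free UCP for spectral projectors of Klein \cite{klein1} (or the arguments of \cite{CHK:2003,CHK:2007}) yields the trace-class strengthening
$$\E\bigl[\tr\chi_I(H^\Lambda_{\omega,1})\bigr]\le C\|\rho\|_\infty\,|I|\,|\Lambda|.$$
Dividing by $|\Lambda|$ and passing to the thermodynamic limit, which exists by $\Z^d$-ergodicity, yields the Lipschitz bound $N^{(1)}(E+\epsilon)-N^{(1)}(E)\le C\|\rho\|_\infty\,\epsilon$. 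Thus $\nu_1(d\lambda)=n^{(1)}(\lambda)\,d\lambda$ with $n^{(1)}\in L^\infty(\R)$.

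Step two: Passage to the $N$-body IDS. Since the potentials are bounded, $\nu_1$ is supported in $[E_-,\infty)$ for some $E_->-\infty$, so the convolution power $\nu_1^{\ast(N-1)}$ is a locally finite positive measure supported in $[(N-1)E_-,\infty)$. For a compact interval $K\subset\R$, $E\in K$, and $0<\epsilon\le 1$, formula \eqref{eq:ids4} gives
$$N^{(N)}(E+\epsilon)-N^{(N)}(E)=\int\bigl[N^{(1)}(E+\epsilon-\lambda)-N^{(1)}(E-\lambda)\bigr]\,d\nu_1^{\ast(N-1)}(\lambda)\le C\|\rho\|_\infty\,\epsilon\cdot\nu_1^{\ast(N-1)}(J_K),$$
where $J_K$ is the compact $\lambda$-interval on which the integrand can be nonzero for $E\in K$. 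Since $\nu_1^{\ast(N-1)}(J_K)<\infty$, this gives local Lipschitz continuity of $N^{(N)}$, uniform on $K$. The DOS $n^{(N)}=dN^{(N)}/dE$ then exists almost everywhere, equals the $N$-fold convolution of $n^{(1)}$ with itself by differentiating under the integral, and is locally bounded as a convolution of locally bounded functions integrated over a compact set.

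The only nontrivial technical input is the trace-class single-particle Wegner estimate needed in Step one; this is standard and supplied by \cite{CHK:2003,CHK:2007,klein1}. Everything beyond that is soft analysis relying on the convolution structure granted by (A1)--(A3), in particular on the Klopp-Zenk identification $N^{(N)}=N_0^{(N)}$ that removes the interaction from the IDS.
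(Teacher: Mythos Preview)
Your argument is correct, but it follows a genuinely different route from the paper. The paper deduces Theorem~\ref{cor:main2} directly from the $N$-body trace-class Wegner estimate \eqref{eq:main1}: under the bounded-density assumption one has $s(|\Delta|)\le \|\rho\|_\infty|\Delta|$, hence $\E\{\tr E_{\omega,N}^\Lambda(\Delta)\}\le C\|\rho\|_\infty|\Delta|\,|\Lambda|$, and dividing by $|\Lambda|$ and passing to the limit (whose existence is guaranteed by Klopp--Zenk) gives local Lipschitz continuity of $N^{(N)}$ in one stroke. Your proof instead bypasses the $N$-body Wegner estimate entirely: you use only the one-particle trace Wegner bound from \cite{CHK:2003,CHK:2007,klein1} to get Lipschitz continuity of $N^{(1)}$, and then transport this to $N^{(N)}$ via the Klopp--Zenk convolution identity $N^{(N)}=N^{(1)}\ast\nu_1^{\ast(N-1)}$.

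The trade-off is this. Your approach is more economical in that it does not invoke the main new technical result of the paper (Theorem~\ref{thm:main-ucp1} and the machinery of Section~\ref{sec:wegnerNbody}); it shows that, once one accepts the Klopp--Zenk identification $N^{(N)}=N_0^{(N)}$, regularity of the $N$-body IDS is already a formal consequence of the one-body theory. The paper's approach, on the other hand, derives the Lipschitz bound from the full $N$-body Wegner estimate and uses Klopp--Zenk only for the \emph{existence} of the limit \eqref{eq:ids1}; in particular it yields the finite-volume bound $|\Lambda|^{-1}\E\{\tr E_{\omega,N}^\Lambda(\Delta)\}\le C\|\rho\|_\infty|\Delta|$ uniformly in $\Lambda$, a statement your convolution argument does not produce. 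Both routes rely on (A1)--(A3), and in particular on the nonnegativity of $U$ built into (A2), which is what makes the Klopp--Zenk identity available.
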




We also prove two-volume Wegner estimates introduced in \cite{chulaevsky-suhov1}. These estimates relate the eigenvalues of the local Hamiltonians associated with two regions that are sufficiently separated. The two-volume estimates are described in section \ref{sec:two-region1} and presented in Theorem \ref{thm:two-region1}.

There have been several recent results on Wegner estimates for $N$-particle systems. For the Anderson model on the lattice,
the covering condition is automatically satisfied.
Chulaevsky and Suhov \cite{chulaevsky-suhov1} studied one-volume Wegner estimates for two-particle operators on $\Z^{2d}$ and mention in
\cite[section 2]{chulaevsky-suhov3} that these methods extend to the case of $N$ particles. They do not obtain the optimal volume dependance.
A result on the IDS similar to Theorem \ref{thm:main1} was given by Kirsch \cite[Theorem 2.1]{Kirsch2007} for the Anderson model on the lattice.
Two-volume Wegner estimates were introduced in \cite[Theorem 2]{chulaevsky-suhov1} for lattice models. Concerning continuum models,
in addition to the one volume Wegner estimate of Klopp and Zenk \cite{klopp-zenk2009}, Boutet de Monvel, Chulaevsky, Stollmann, and Suhov presented one- and two-volume Wegner estimates for $N$-body operators in \cite{bcss1}.

One- and two-volume Wegner estimates are also a tool in the proof of localization for ergodic random $N$-particle Hamiltonians using the multi-scale analysis (MSA) technique.
We mention several works concerning localization on the lattice using MSA: \cite{chulaevsky-suhov2}, \cite{chulaevsky-suhov3}, \cite{klein-nguyen1}, and localization for models on the continuum using MSA: \cite{cbs}, \cite{klein-nguyen2}.
Aizenman and Warzel \cite{aw1} used the method of
fractional moments to prove spectral and dynamical localization for $N$-particle operators on the lattice.

\vspace{.1in}
\noindent
{\it Acknowledgement:} We thank A.\ Klein and S.\ T.\ Nguyen for making preprint \cite{klein-nguyen2} available to us before posting it and for discussions.


\section{Scale-free unique continuation principle for spectral projectors}\label{sec:sfucp1}

Let $H$ be a self-adjoint Schr\"odinger operator on the Hilbert space $L^2 ( \R^D)$ for some dimension $D \geq 1$. Let $\Lambda \subset \R^D$ be a rectangle with sides parallel to the principal axes. Let $H_\Lambda$ be the restriction of $H$ to $\Lambda$ with self-adjoint Dirichlet or periodic boundary conditions on
$\partial \Lambda$. We write $P_{H_\Lambda}(I)$ for the spectral projector for $H_\Lambda$ and the interval $I\subset \R$.
Let $\{ y_j \} \subset \R^D$ be a collection of points so that $y_j \in \Lambda_1 (j)$, for $j \in \Z^D$. Let $u$ be a single-site potential and form the potential $W(x) = \sum_{j \in \Z^D} u( x - y_j)$, $x \in \R^D$.
Let $W_\Lambda$ denote the restriction of $W$ to $\Lambda$.
The \emph{scale-free unique continuation principle for spectral projectors} (sfUCPSP) for the Hamiltonian $H_\Lambda$ and spectral projector $P_{H_\Lambda} (I)$ is the statement that there exists a finite, positive constant $\kappa > 0$, independent of $\Lambda$, so that
\beq\label{eq:sfucp1}
P_{H_\Lambda} (I) W_\Lambda P_{H_\Lambda} (I) \geq \kappa P_{H_\Lambda} (I).
\eeq
We will use the sfUCPSP due to A.\ Klein \cite{klein1}. We remark that although the constant $\kappa$ will depend on the details of the potential in the initial box, it is independent of successively larger boxes and is in this sense scale-free.

\begin{theorem}\cite[Theorem 1.1]{klein1}\label{thm:sfucp1}
Let $H = - \Delta + V$ be a self-adjoint Schr\"odinger operator on $L^2 (\R^D)$, where $V$ is a bounded potential.
Fix $\delta \in ] 0 , 1/2]$ and let $\{ y_j \}$ be points in $\R^D$ so that $B(y_j, \delta) \subset \Lambda_1 (j)$ for all $j \in \Z^D$. Define a nonnegative potential $W$
by
\beq\label{eq:defn-w1}
W(x) = \sum_{j \in \Z^D} \chi_{B(y_j, \delta )} (x) \geq 0.
\eeq
For any $E_0 > 0$, define a constant $K = K(V, E_0) = 2 \| V \|_\infty + E_0$. Let $\Lambda_L (x_0) \subset \R^D$ be a cube
of side length $L > 72 \sqrt{D}$, with $L$ an odd integer, centered at $x_0 \in \Z^D$. Then, there exists a finite positive constant $M_D > 0$, such that, if we define a constant $\gamma = \gamma (D, K, \delta) > 0$ by
\beq\label{eq:ucp-constant1}
\gamma^2 = (1/2) e^{ M_D (1 + K^{2/3}) \log \delta},
\eeq
then for any closed interval $I \subset ] - \infty, E_0]$, with $|I| < 2 \gamma$, we have
\beq\label{eq:sfucp2}
P_I (H_{\Lambda_L(x_0)}) W_{\Lambda_L(x_0)} P_I (H_{\Lambda_L(x_0)})  \geq \gamma^2    P_I (H_{\Lambda_L(x_0)}).
\eeq
\end{theorem}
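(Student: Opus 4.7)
The plan is to reduce the operator inequality (\ref{eq:sfucp2}) to a quantitative unique continuation principle (QUCP) for quasi-modes, and then to prove the QUCP by iterated Carleman estimates on unit-scale balls. Writing $\Lambda := \Lambda_L(x_0)$, by polarization it suffices to prove the scalar bound $\langle \phi, W_\Lambda \phi \rangle \geq \gamma^2 \|\phi\|_{L^2(\Lambda)}^2$ for every $\phi \in \mathrm{Ran}(P_I(H_\Lambda))$. For such $\phi$, let $E_c$ be the midpoint of $I$ and set $\xi := (H_\Lambda - E_c)\phi$; the spectral theorem gives $\|\xi\|_{L^2(\Lambda)} \leq (|I|/2)\|\phi\|_{L^2(\Lambda)} < \gamma\|\phi\|_{L^2(\Lambda)}$, and $\phi$ solves the inhomogeneous elliptic equation
\beq\label{eq:planeq}
(-\Delta + V - E_c)\phi = \xi \quad \text{on } \Lambda,
\eeq
with Dirichlet or periodic boundary data on $\partial\Lambda$. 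This is the ``quasi-mode'' regime in which the QUCP must hold.

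The heart of the argument is then a QUCP asserting that, for every $\phi$ satisfying (\ref{eq:planeq}) with $|E_c| \leq E_0$, there exist absolute constants $C_1, C_2 > 0$ such that
\beq\label{eq:QUCP-plan}
\int_\Lambda W|\phi|^2\, dx \geq C_1 \gamma^2 \int_\Lambda |\phi|^2\, dx - C_2 \|\xi\|_{L^2(\Lambda)}^2,
\eeq
with $\gamma^2$ of the form (\ref{eq:ucp-constant1}); the constraint $|I| < 2\gamma$ forces the error term to be dominated by the main term, giving (\ref{eq:sfucp2}) after a harmless absorption of constants into $M_D$. The QUCP itself would be proved by an $L^2$ Carleman estimate
\beq\label{eq:carleman}
\tau^3 \|e^{\tau\varphi} v\|_{L^2(B)}^2 \leq C \|e^{\tau\varphi}(-\Delta + V - E_c) v\|_{L^2(B)}^2,
\eeq
for $v$ supported in a unit ball $B$ away from the singular point of a suitable radial weight $\varphi$ and for $\tau \geq \tau_0(K,D)$ large. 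From (\ref{eq:carleman}) one extracts a three-ball, propagation-of-smallness inequality controlling $\|\phi\|_{L^2}$ on larger concentric balls by its value on a small ball $B(y_j,\delta)$ plus $\|\xi\|_{L^2}$. Iterating along chains of overlapping unit balls that cover $\Lambda_L(x_0)$, and using the equidistribution hypothesis $B(y_j,\delta) \subset \Lambda_1(j)$ so that every unit cube in $\Lambda$ already contains such a small ball, one transfers the $L^2$-mass of $\phi$ from the bulk of $\Lambda$ onto the support of $W$.

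The principal obstacle is ensuring that the resulting constant $\gamma$ is \emph{scale-free}, i.e., independent of the side length $L$. A naive chain iteration would lose a multiplicative factor at each step, and chains of length proportional to $L$ would produce a constant decaying exponentially in $L$ and hence useless for the Wegner estimate in Theorem~\ref{thm:main1}. The Bourgain--Kenig and Germinet--Klein strategy, extended to arbitrary energies by Klein, resolves this by organizing the Carleman iteration globally and optimizing $\tau$ against $\|V\|_\infty \leq K$, so that all accumulated losses are absorbed into the single exponential factor $e^{M_D(1+K^{2/3})\log\delta}$ in (\ref{eq:ucp-constant1}). The sharp exponent $K^{2/3}$ is the Donnelly--Fefferman scaling for solutions of elliptic equations with bounded potential, and the role of the equidistribution of the centers $y_j \in \Lambda_1(j)$ is precisely to prevent the geometric/combinatorial cost of the chain iteration from growing with $L$.
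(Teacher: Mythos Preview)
The paper does not prove this theorem: it is quoted verbatim as \cite[Theorem 1.1]{klein1} and used as a black box in the proof of Theorem~\ref{thm:main-ucp1}, so there is no in-paper proof to compare your proposal against. Your outline is a faithful high-level sketch of Klein's actual argument in \cite{klein1}, which in turn refines the eigenfunction QUCP of Rojas-Molina and Veseli\'c \cite{RM-V1}: the reduction to a quasi-mode equation via the midpoint $E_c$ of $I$, the Carleman-based three-ball propagation, and the identification of the scale-free constant through the Bourgain--Kenig organization of the iteration with the Donnelly--Fefferman exponent $K^{2/3}$ are all the correct ingredients. As a proof sketch it is accurate; as a self-contained proof it would of course need the detailed Carleman estimate with explicit weight and the careful bookkeeping (covering $\Lambda_L$ by dominant boxes, handling the boundary via reflection for periodic/Dirichlet conditions, tracking the constants through the iteration) that occupy the bulk of \cite{klein1} and \cite{RM-V1}, but none of that is expected here since the present paper treats the result as an external input.
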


We apply this result in the next section to $N$-body Schr\"odinger operators.
In our application of this theorem, the dimension $D = Nd$ and the unperturbed operator $H$ is the $N$-body operator $H_{0,N}$ given in
\eqref{eq:hamiltonian1}. Consequently, the potential $V$ in Theorem \ref{thm:sfucp1}
is the pair-interaction potential or any other $N$-body potential provided it is bounded independently of $N$ and $\Lambda$.
The potential $W$ will be constructed in the next section from the single-site potentials $u$.


\section{sfUCPSP for $N$-body random Schr\"odinger operators}\label{sec:sfucp-N-schrod1}

In this section we apply the Klein's sfUCPSP Theorem \ref{thm:sfucp1} to the $N$-particle random Schr\"odinger operators $H_{\omega,N}$ as defined in \eqref{eq:hamiltonian2}.
We recall that the $N$-particle random potential $V_\omega^{(N)}(x)$ on $L^2(\R^{Nd})$ is given by
\beq\label{eq:N-pot1}
V_\omega^{(N)}(x_1, \ldots, x_N) = \sum_{i=1}^N V_\omega^{(1)} (x_i),
\eeq
where the one-particle random potential is define in \eqref{eq:hamiltonian2.1}.
We denote by $\tilde{V}^{(N)}$, respectively, $\tilde{V}^{(1)}$, the $N$-body Anderson potential \eqref{eq:N-pot1},
respectively, the one-body Anderson potential \eqref{eq:hamiltonian2.1},
with all random variables set equal to one.
In order to satisfy the hypotheses of Theorem \ref{thm:sfucp1}, we assume that there exists a scale $\delta$ with $0 < \delta \leq \ell$ so that $B(y_j, \delta) \subset \Lambda_1 (j)$, for $j \in \Z^d$.

Given an $N$-particle rectangle
$\Lambda \subset \R^{Nd}$, we let $H_{\Lambda}^{(N)}$ denote the random interacting $N$-particle Hamiltonian
\eqref{eq:hamiltonian1} restricted to $\Lambda$ with periodic or Dirichlet boundary conditions.
Let $E_{\omega,N}^\Lambda (I)$ be the random spectral projector for the local Hamiltonian $H_{\omega,N}^\Lambda$
and the energy interval $I \subset \R$.
Another major advantage of the method of Klein \cite{klein1} is that the sfUCPSP for spectral projectors of random Schr\"odinger operators can be proved for the projectors $E_{\Lambda}^{(N)} (I)$ rather than the free projectors as in \cite{CHK:2007}.

The key sfUCPSP estimate is the following \eqref{eq:ucp-spproj0} that relies on the result of Klein, Theorem \ref{thm:sfucp1}.
We take $D = Nd$ in Theorem \ref{thm:sfucp1}. Because of this, the constant $M_D$ and, consequently, the constant $\gamma$, depends on the particle number $N$.

\begin{theorem}\label{thm:main-ucp1}
Let $H_{\omega,N}^{\Lambda_L}$ be the local $N$-body random Schr\"odinger operator as defined in
\eqref{eq:hamiltonian2}
restricted to an $N$-particle cube of side length $L>0$,$L$ an odd integer large enough, with Dirichlet or periodic boundary conditions.
For any $E_0 > 0$, let $\gamma_N \equiv \gamma (d, U,I, u, N, E_0) > 0$
be the constant in \eqref{eq:ucp-constant1} depending on $E_0$, the dimension $d$, the interval $I$, the $N$-body potential $U$, particle number $N \geq 1$, and the single-site potential $u$, but independent of $\Lambda_L$. Then,
for any interval $I \subset (- \infty , E_0]$ with $|I|  \leq 2 \gamma$, we have
\beq\label{eq:ucp-spproj0}
E_{\omega,N}^\Lambda (I) \tilde{V}_{\Lambda}^{(N)} E_{\omega,N}^\Lambda (I) \geq N \gamma_N^2  E_{\omega,N}^\Lambda (I) .
\eeq
\end{theorem}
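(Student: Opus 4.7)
The plan is to apply Klein's scale-free UCP (Theorem \ref{thm:sfucp1}) $N$ times in the ambient dimension $D = Nd$, once per particle index $i$, and then sum the $N$ resulting scalar lower bounds. The technical reduction is a geometric construction: for each $i \in \{1, \ldots, N\}$ I would produce a ball-sum potential $W_i$ on $\R^{Nd}$ that both fits the structural hypothesis of Theorem \ref{thm:sfucp1} and lies pointwise below $\tilde V^{(1)}(x_i)$. Fix $\delta' := \min(\delta, \ell/2) \in \,]0, 1/2]$, and for each $J = (j_1, \ldots, j_N) \in \Z^{Nd}$ set the center
\[ Y_J^{(i)} := (j_1, \ldots, j_{i-1}, y_{j_i}, j_{i+1}, \ldots, j_N) \in \Lambda_1(J) \subset \R^{Nd}. \]
The inclusion $B(Y_J^{(i)}, \delta') \subset \Lambda_1(J)$ holds because $\delta' \leq 1/2$ places $B(j_k, \delta')$ inside $\Lambda_1(j_k)$ for $k \neq i$, and $\delta' \leq \delta$ combined with the standing hypothesis $B(y_j, \delta) \subset \Lambda_1(j)$ places $B(y_{j_i}, \delta')$ inside $\Lambda_1(j_i)$. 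Define $W_i(X) := \sum_{J \in \Z^{Nd}} \chi_{B(Y_J^{(i)}, \delta')}(X)$.

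Next I would verify the pointwise inequality $W_i(X) \leq \tilde V^{(1)}(x_i)$. Since the balls $B(Y_J^{(i)}, \delta')$ lie in distinct unit cubes of $\R^{Nd}$, they are pairwise disjoint, so $W_i(X) \in \{0, 1\}$. If $W_i(X) = 1$ then $X \in B(Y_J^{(i)}, \delta')$ for the unique $J$ in question, which forces $|x_i - y_{j_i}| < \delta' \leq \ell/2$, hence $x_i - y_{j_i} \in \Lambda_\ell(0)$; the hypothesis $\chi_{\Lambda_\ell(0)} \leq u$ then yields $u(x_i - y_{j_i}) \geq 1$, so $\tilde V^{(1)}(x_i) \geq u(x_i - y_{j_i}) \geq 1 = W_i(X)$. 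Now apply Theorem \ref{thm:sfucp1} in dimension $Nd$ to $H_{\omega,N}^{\Lambda_L} = -\sum_j \Delta_j + U + V_\omega^{(N)}$: the potential $U + V_\omega^{(N)}$ is uniformly bounded (using $u \leq 1$, $\supp u$ compact, and a deterministic bound on the $\omega_j$), the side-length constraint $L > 72 \sqrt{Nd}$ is in force, and $W_i$ is admissible by the previous step; consequently there is $\gamma_N > 0$ depending only on $d, N, E_0, U, u, I$ such that for each $i$
\[ E_{\omega, N}^\Lambda(I) \, (W_i)_\Lambda \, E_{\omega, N}^\Lambda(I) \geq \gamma_N^2 \, E_{\omega, N}^\Lambda(I). \]
Summing and using $(W_i)_\Lambda \leq \tilde V^{(1)}(x_i)_\Lambda$ together with $\tilde V^{(N)}_\Lambda = \sum_{i=1}^N \tilde V^{(1)}(x_i)_\Lambda$ gives \eqref{eq:ucp-spproj0}.

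The main obstacle I anticipate is the engineering of the auxiliary $W_i$'s: each $\tilde V^{(1)}(x_i)$, viewed as a function on $\R^{Nd}$, has cylindrical support invariant in the coordinates $x_k$ with $k \neq i$ and therefore cannot be used directly in Klein's ball-sum criterion; one must shrink it to a sparser, genuinely-$\R^{Nd}$-ball-supported lower bound by pinning the transverse coordinates to lattice points, which is exactly the role of $Y_J^{(i)}$ above. A secondary subtlety is uniformity of the constant $\gamma_N$: through the Klein constant $K = 2\|V\|_\infty + E_0$, the estimate involves the sup-norm of $V_\omega^{(N)}$, and a deterministic bound (e.g.\ an a.s.\ bound on the $\omega_j$) is needed to keep $\gamma_N$ independent of $\omega$ and of $\Lambda$.
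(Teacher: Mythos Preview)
Your argument is correct, and it reaches the conclusion by a route genuinely different from the paper's. The paper obtains the factor $N$ at the level of a \emph{single} pointwise inequality: it bounds
\[
\tilde V^{(N)}_\Lambda(x_1,\ldots,x_N)\;\ge\;\sum_{i=1}^N\sum_{j_i}\chi_{B(y_{j_i},\delta)}(x_i)\;\ge\;N\sum_{j\in\tilde\Lambda_L}\chi_{B(y_j,\delta)}(x_1,\ldots,x_N),
\]
with $y_j=(y_{j_1},\ldots,y_{j_N})$, using that each one-particle ball sum is $\le 1$ so multiplying by the remaining ones only decreases, and that the $Nd$-ball $B(y_j,\delta)$ is contained in the product $\prod_k B(y_{j_k},\delta)$; it then invokes Klein's Theorem~\ref{thm:sfucp1} \emph{once} in dimension $Nd$ with the symmetric center family $\{y_j\}$. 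You instead build $N$ distinct admissible ball-sum potentials $W_i$ with asymmetric centers $Y_J^{(i)}$ (the $i$-th coordinate at $y_{j_i}$, all others at lattice points), apply Klein $N$ times, and sum the resulting operator inequalities. Your construction is a bit more laborious (it forces the auxiliary radius $\delta'=\min(\delta,\ell/2)$ and hence a possibly smaller constant $\gamma_N$), but it is conceptually clean about where the factor $N$ originates and sidesteps the product-of-balls step. The paper's single-shot argument is slicker and uses only the original $\delta$. Your closing remark about needing a deterministic bound on the $\omega_j$ to make $K=2\|U+V_\omega^{(N)}\|_\infty+E_0$ uniform in $\omega$ is well taken and applies equally to the paper's proof.
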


\begin{proof}
Let $\Lambda_L \subset \R^{Nd}$ be an $N$-particle cube.
We apply Theorem \ref{thm:sfucp1} to $H_{\omega,N}^{\Lambda_L}$ with $D = Nd$.
The key to proving \eqref{eq:ucp-spproj0} is the following lower bound on the potential ${\tilde{V}}_\Lambda^{(N)}$.
The single-site potential $u$ is chosen so that $u(x) \geq \chi_{\Lambda_\ell (0)} (x)$, for $x \in \R^d$, and $\ell > 0$. We recall that we have assumed the existence of a scale $\delta$ so that $B(y_j, \delta) \subset \Lambda_1 (j)$, for $j \in \Z^d$.
For the $i^{\rm th}$-coordinate, we denote by $j_i \in \Z^d$ the integer points in $\Lambda_{L_i} (0) \subset \R^d$. We call the set of these integer points $\tilde{\Lambda}_{L_i} (0)$. Let $y_{j_i} \in \R^d$ be the point in the unit cube $\Lambda_1 (j_i) \subset \Lambda_{L_i}(0)$.
In order to distinguish the coordinates,
we label the $d$-dimensional cubes $\Lambda_{L_i} \subset \R^d$ although we will take $L = L_i$ for $i = 1, \ldots , N$.
We then have
\bea\label{eq:lb-Npot1}
\tilde{V}_{\Lambda}^{(N)}(x_1, \ldots, x_N) &=& \sum_{i=1}^N \tilde{V}_{\Lambda_i} (x_i)  =  \sum_{i=1}^N \left( \sum_{j_i \in \tilde{\Lambda}_{L_i} (0)} u(x_i - y_{j_i}) \right) \nonumber \\
 &  \geq &  \sum_{i=1}^N \left( \sum_{j_i \in \tilde{\Lambda}_{L_i} (0)} \chi_{B(y_{j_i}, \delta)} (x_i ) \right) \nonumber \\ \label{eq:lb-Npot2}  \\
  & \geq & \sum_{i=1}^N \left( \sum_{j_1 \in \tilde{\Lambda}_{L_1} (0)} \cdots \sum_{j_N \in \tilde{\Lambda}_{L_N} (0)}
   \chi_{B(y_{j_1}, \delta )} (x_1 ) \cdots \chi_{B(y_{j_N}, \delta)} (x_N) \right) \nonumber \\
    \label{eq:lb-Npot3} \\
    & \geq & N ~ \sum_{j=(j_1, \ldots, j_N) \in \tilde{\Lambda}_L} \chi_{B(y_j, \delta )}(x_1, \ldots , x_N) .
  \eea
In line \eqref{eq:lb-Npot2}, we used the fact that
\beq\label{eq:lb-sum1}
1_{\Lambda_{L_k}(0)} (x_k) \geq \sum_{j_k   \in \tilde{\Lambda}_{L_k} (0)} \chi_{B(y_{j_k}, \delta )} (x_k ) .
\eeq
In the last step \eqref{eq:lb-Npot3}, we used the fact that $y_j = ( y_{j_1}, \ldots, y_{j_N}) \in \R^{Nd}$ and
the fact that $\prod_{i=1}^N {B(y_{j_i}, \delta )} = \{ x \in \R^{Nd} ~|~ \| x - y_j \| < \delta  \} = B(y_j, \delta )$.
Due to this lower bound \eqref{eq:lb-Npot1}, we have
\beq\label{eq:ucp-spproj1}
E_{\omega,N}^\Lambda (I) \tilde{V}_{\Lambda}^{(N)}(x_1, \ldots, x_N) E_{\omega,N}^\Lambda (I)  \geq
               N ~ \sum_{j \in \tilde{\Lambda}_L}  E_{\omega,N}^\Lambda (I) \chi_{B(y_j, \delta )}(x_1, \ldots , x_N)
               E_{\omega,N}^\Lambda (I) .
\eeq
We can now directly apply the result of Klein \cite[Theorem 1.1]{klein1} taking $W_{\Lambda_L}$
to be
\beq\label{eq:ucp-spproj2}
W_{\Lambda_L}(x_1, \ldots, x_N) = \sum_{j \in \tilde{\Lambda}_L} \chi_{B(j, \delta)}(x_1, \ldots , x_N)  ,
\eeq
appearing on the right side of \eqref{eq:ucp-spproj1}.
\end{proof}



\section{Proof of the Wegner Estimate for $N$-body random Schr\"odinger operators}\label{sec:wegnerNbody}

With the sfUCPSP for spectral projections, estimate \eqref{eq:ucp-spproj1}, we can follow the proof of the Wegner estimate in \cite{CHK:2007}. We present the main steps here with attention to the modifications necessary for the $N$-body case. As above, we work with $N$-particle cubes $\Lambda_L = \Lambda_{L} \times \cdots \times  \Lambda_{L}$, and write the coordinates as $x = (x_1 , x_2, \ldots , x_N) \in \R^{Nd} = \R^d \times \cdots
\times \R^d$.

The projector $E_\Lambda^{(N)} ( \Delta)$ for the $N$-body operator $H_{\omega, N}$ restricted to $\Lambda \subset \R^{Nd}$ with periodic boundary conditions (PBC) or Dirichlet boundary conditions (DBC) is a trace class operator. For the Wegner estimate, we
need to estimate
\begin{equation}
  \label{eq:trace1}
  \E \{ Tr E_\Lambda^{(N)} ( \Delta) \} .
\end{equation}
Because we have a sfUCPSP involving the spectral projectors for $H_{\omega, N}^\Lambda$, and not just the unperturbed operators $H_{0,N}^\Lambda$, we can follow Klein \cite[section 3, Lemma 3.1]{klein1} and avoid the decomposition with respect to $H_{0,N}^\Lambda$ as in \cite[section 4]{CHK:2007}.
We use \eqref{eq:ucp-spproj0} twice to write
\bea\label{eq:upperbound1}
\E \{ Tr E_\Lambda^{(N)} ( \Delta) \} & \leq & ( N \gamma_N^{2})^{-1} \E \{ Tr E_\Lambda^{(N)} ( \Delta) \tilde{V}^{(N)}_\Lambda  \} \nonumber \\
 & \leq & ( N \gamma_N^{2})^{-2} \E \{ Tr E_\Lambda^{(N)} ( \Delta) \tilde{V}^{(N)}_\Lambda  E_\Lambda^{(N)} ( \Delta) \tilde{V}^{(N)}_\Lambda  \}
\nonumber \\
 & \leq & (E_0 + M) ( N \gamma_N^{2})^{-2} \E \{ Tr E_\Lambda^{(N)} ( \Delta) \tilde{V}^{(N)}_\Lambda  (H_{\omega,N}^\Lambda + M )^{-1}
 E_\Lambda^{(N)} ( \Delta) \tilde{V}^{(N)}_\Lambda  \} \nonumber \\
 & \leq & (E_0 + M) ( N \gamma_N^{2})^{-2} \E \{ Tr E_\Lambda^{(N)} ( \Delta) \tilde{V}^{(N)}_\Lambda  (H_{0,N}^\Lambda + M )^{-1}
   \tilde{V}^{(N)}_\Lambda  \} . \nonumber \\
  & &
 \eea
In the last line, we used the positivity of the perturbation $\tilde{V}^{(N)}_\Lambda $ and the fact that $\Delta \subset [-M, E_0]$:
\bea\label{eq:positivity1}
\tilde{V}^{(N)}_\Lambda  E_\Lambda^{(N)} ( \Delta) \tilde{V}^{(N)}_\Lambda & \leq & \tilde{V}^{(N)}_\Lambda (H_{\omega,N}^\Lambda + M )^{-1} (E_0 + M) \tilde{V}^{(N)}_\Lambda    \nonumber \\
  &\leq& (E_0 + M) \tilde{V}^{(N)}_\Lambda  (H_{0,N}^\Lambda + M )^{-1} \tilde{V}^{(N)}_\Lambda
\eea

We now turn to estimating the expectation on the right side of \eqref{eq:upperbound1}.



\subsection{Preliminaries: spectral averaging}\label{subsec:spectral-ave1}

We need a version of the spectral averaging result \cite[Lemma 2.1]{CHK:2007}.

Since the random variables are simply independent, we need to formulate the Levy concentration.
For each $j \in \Z^d$, a relative probability measure $\mu_j$ is defined by
\beq\label{eq:relative1}
\mu_j ( [E, E+|I|]) = \P \{ \omega_j \in [E, E +|I|] ~|~ (\omega_k)_{k\neq j} \} .
\eeq
using the conditional probability. We then define the Levy concentration $s$ by
\beq\label{eq:levy1}
s(|I|) = \sup_{j \in \Z^d} \E \{ \sup_{E \in \R} \mu_j ( [ E, E + |I| ]) \}.
\eeq

\begin{lemma}\label{lemma:sp-ave1}
Let $j \equiv (j_i, \ldots , j_N) \in \Z^{Nd}$ and let $\Phi_j$ be a compactly supported function with support in a ball around $j$.
Let $K$ be a bounded operator so that $\Phi_j K \Phi_\ell$ is trace class. We then have
\beq\label{eq:sp-ave1}
\E ( {\rm Tr} E^{(N)}_\Lambda (\Delta) \Phi_j K \Phi_k ) \leq 8 s(|\Delta|) \| \Phi_j K \Phi_\ell \|_1 ,
\eeq
where $s(\cdot)$ is the Levy concentration defined in \eqref{eq:levy1}.
\end{lemma}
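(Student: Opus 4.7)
\emph{Proof plan.} The argument follows \cite[Lemma 2.1]{CHK:2007}, with the common bounded density in the iid case replaced by the Levy concentration $s$ of \eqref{eq:levy1} to accommodate the merely independent random variables used here.

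Using the singular value decomposition $\Phi_j K \Phi_\ell = \sum_n s_n |\phi_n\rangle\langle\psi_n|$, with $s_n \ge 0$, $\sum_n s_n = \|\Phi_j K \Phi_\ell\|_1$, and unit vectors $\phi_n \in \overline{\mathrm{Ran}(\Phi_j)}$, $\psi_n \in \overline{\mathrm{Ran}(\Phi_\ell)}$, one writes
$${\rm Tr}\bigl(E_{\omega,N}^\Lambda(\Delta)\, \Phi_j K \Phi_\ell\bigr) \;=\; \sum_n s_n\, \langle \psi_n, E_{\omega,N}^\Lambda(\Delta)\, \phi_n\rangle.$$
The polarization identity $4\langle \psi, P\phi\rangle = \sum_{\alpha=0}^3 i^\alpha \langle \phi+i^\alpha\psi, P(\phi+i^\alpha\psi)\rangle$ for $P = E_{\omega,N}^\Lambda(\Delta)$, combined with $\sum_{\alpha=0}^3 \|\phi+i^\alpha\psi\|^2 = 4(\|\phi\|^2+\|\psi\|^2)$, reduces the lemma to the diagonal Wegner estimate
$$\E\bigl(\langle v, E_{\omega,N}^\Lambda(\Delta)\, v\rangle\bigr) \;\le\; 4\, s(|\Delta|)\, \|v\|^2$$
for every $v$ compactly supported in a ball around an $N$-particle lattice point.

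To prove the diagonal bound, choose using the support of $v$ a coordinate index $i_0 \in \{1,\ldots,N\}$ and a lattice point $j_{i_0}\in\Z^d$ such that on $\supp v$ the single-site potential $u(x_{i_0}-y_{j_{i_0}})$ is bounded below by a positive constant; this is possible by the covering bound $u \ge \chi_{\Lambda_\ell(0)}$ together with the inclusion $y_k \in \Lambda_1(k)$. Conditioning on all $\omega_k$ with $k \ne j_{i_0}$, the family $\omega_{j_{i_0}} \mapsto H_{\omega,N}^\Lambda$ is monotone non-decreasing, with derivative $\partial_{\omega_{j_{i_0}}} H_{\omega,N}^\Lambda = \sum_{m=1}^N u(x_m-y_{j_{i_0}}) \ge 0$. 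The one-variable Combes--Hislop spectral averaging used in \cite[Section 4]{CHK:2007} (see also \cite[Lemma 3.1]{klein1}) then gives
$$\int \langle v, E_{\omega,N}^\Lambda(\Delta)\, v\rangle\, d\mu_{j_{i_0}}(\omega_{j_{i_0}}) \;\le\; 4\, \sup_{E\in\R} \mu_{j_{i_0}}([E,E+|\Delta|])\, \|v\|^2.$$
Integrating over the remaining variables and invoking the definition \eqref{eq:levy1} of $s$ yields the diagonal estimate; polarization and summation in $n$ then give the lemma with overall constant $8$.

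\emph{Main difficulty.} The crux is the one-variable spectral averaging inequality itself, which rests on Krein's spectral shift function for the monotone one-parameter perturbation $\omega_{j_{i_0}} u(\cdot-y_{j_{i_0}})$ together with monotonicity of eigenvalues along this family; this is the step where independence (rather than any further joint structure) of the $\omega_k$ is essential, since the conditioning argument uses only the conditional distribution $\mu_{j_{i_0}}$ and its concentration. The explicit constant $8$ is the product of the Combes--Hislop factor $4$ in the diagonal estimate and the factor $2$ introduced by the polarization identity.
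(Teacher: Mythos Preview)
Your overall strategy matches the paper's: it states the lemma without proof, simply citing it as ``a version of \cite[Lemma~2.1]{CHK:2007}'' with the single-site density replaced by the Levy concentration~$s$. So following the Combes--Hislop spectral averaging route is exactly what is intended.

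There is, however, a genuine gap in your reduction step. After polarization the vectors $v=\phi_n+i^\alpha\psi_n$ have $\phi_n\in\overline{\mathrm{Ran}(\Phi_j)}$ and $\psi_n\in\overline{\mathrm{Ran}(\Phi_\ell)}$, so for $j\neq\ell$ the support of $v$ lies in the union of \emph{two} balls, not one. Your diagonal bound, as you state and prove it, requires picking a single site $j_{i_0}\in\Z^d$ for which $u(x_{i_0}-y_{j_{i_0}})$ is bounded below on all of $\supp v$; when $j$ and $\ell$ share no $\Z^d$-component (which is the generic situation here, and already the only situation in the one-body case $N=1$), no such site exists. So the polarization does not actually reduce to the diagonal estimate you claim.

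The fix is to bypass polarization and exploit that $E_{\omega,N}^\Lambda(\Delta)$ is a projection: by Cauchy--Schwarz,
\[
\bigl|\langle\psi_n,E_{\omega,N}^\Lambda(\Delta)\phi_n\rangle\bigr|
\;\le\;\tfrac12\bigl(\langle\phi_n,E_{\omega,N}^\Lambda(\Delta)\phi_n\rangle+\langle\psi_n,E_{\omega,N}^\Lambda(\Delta)\psi_n\rangle\bigr),
\]
and now each diagonal term involves a vector supported near a \emph{single} $N$-particle lattice point, so your one-variable spectral averaging argument (conditioning on all $\omega_k$ with $k\neq j_{i_0}$ and using the monotone rank-one-type perturbation) applies separately to $\phi_n$ and to $\psi_n$. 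Summing over $n$ then yields the lemma. A secondary caution: the claim that $u(x_{i_0}-y_{j_{i_0}})$ is bounded below on the ball around $j$ needs the ball to be small compared with the positivity set of $u$; in the paper's actual application the functions $\Phi$ contain an explicit factor $u(x_i-y_m)$ (see the definition of $\Phi_{s_1,\ldots,m_i,\ldots,s_N}$), which is what makes the spectral averaging go through.
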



\subsection{Preliminaries: trace estimates}\label{subsec:trace-est1}

In the $N$-body case with each particle moving in $\R^d$, we have the following results.

\begin{lemma}\cite[Appendix A]{CHK:2007}\label{lemma:trace-class1}
Let $\chi \in C_0 ( \Lambda)$ and suppose $H_{0,N}^\Lambda \geq - M > - \infty$.
\begin{enumerate}
\item The operator $\chi R_{0,N}^\Lambda(-M)^m \in \mathcal{I}_1$, for an integer $m \in \N$ satisfying
$m > (Nd) / 2$.
\item The operator $\chi R_{0,N}^\Lambda (-M) \in \mathcal{I}_m$, for an integer $m \in \N$ satisfying
$m > (Nd) / 2$.
\end{enumerate}
\end{lemma}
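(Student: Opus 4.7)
The plan is to reduce both statements to Weyl eigenvalue asymptotics for $H_{0,N}^\Lambda$ on the bounded region $\Lambda \subset \R^{Nd}$, and then to invoke the ideal property of the Schatten classes to absorb multiplication by the bounded cutoff $\chi$. This mirrors the strategy in \cite[Appendix A]{CHK:2007}, adapted to the $N$-body setting where the relevant dimension is $Nd$ rather than $d$.

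First I would establish the eigenvalue asymptotics. The operator $-\sum_{j=1}^N \Delta_j$ restricted to the $N$-particle product cube $\Lambda$ with Dirichlet or periodic boundary conditions separates as a tensor sum of one-body Laplacians, and its eigenvalues $\{\mu_n^0\}$ therefore obey the classical Weyl law $\mu_n^0 \asymp c_{Nd}(n/|\Lambda|)^{2/(Nd)}$. The interaction $U$ is bounded, so the min-max principle yields $|\lambda_n - \mu_n^0| \leq \|U\|_\infty$, where $\{\lambda_n\}$ are the eigenvalues of $H_{0,N}^\Lambda$; consequently $\lambda_n \asymp n^{2/(Nd)}$ as $n \to \infty$.

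Given this, both statements follow quickly. For part (2), the singular values of $R_{0,N}^\Lambda(-M) = (H_{0,N}^\Lambda + M)^{-1}$ are $(\lambda_n + M)^{-1}$, and the series $\sum_n (\lambda_n + M)^{-m}$ converges precisely when $m > Nd/2$; this places $R_{0,N}^\Lambda(-M)$ in the Schatten class $\mathcal{I}_m$. The bounded multiplier $\chi$ preserves $\mathcal{I}_m$ by the ideal property. For part (1), the same estimate applied to the $m$-th power of the resolvent gives singular values $(\lambda_n + M)^{-m}$, which are summable for $m > Nd/2$, so $R_{0,N}^\Lambda(-M)^m \in \mathcal{I}_1$; multiplication by the bounded $\chi$ then keeps the operator in the trace class.

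The main — and standard — technical point is the Weyl asymptotic law in the $N$-body setting. Since the Laplacian part separates on the product cube, its asymptotics reduce to the one-body Weyl law, and the bounded interaction $U$ is handled by min-max, as indicated. An alternative route, avoiding explicit Weyl asymptotics, is to invoke Birman--Solomyak trace-ideal estimates of the form $\|f(x) g(-i\nabla)\|_p \leq C_p \|f\|_p \|g\|_p$ for $p \geq 2$, after extending operators by zero to $L^2(\R^{Nd})$ and using the boundedness and compact support of $\chi$ to supply the required $L^p$ factor; either route yields the claimed Schatten exponents.
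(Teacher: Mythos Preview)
The paper does not actually prove this lemma; it is quoted verbatim from \cite[Appendix A]{CHK:2007} and used as a black box, so there is no in-paper argument to compare against. Your proof is correct and is essentially the standard route: on a bounded region the resolvent of $H_{0,N}^\Lambda$ already lies in the claimed Schatten ideals by Weyl asymptotics in dimension $Nd$, and the bounded cutoff $\chi$ is absorbed by the ideal property. One small remark: you do not in fact need $\chi$ at all for the conclusion on a bounded $\Lambda$, which your argument makes clear; the role of $\chi$ in \cite{CHK:2007} is more essential in the infinite-volume setting, where the Birman--Solomyak route you mention as an alternative is the natural one.
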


\begin{lemma}\cite[Appendix A]{CHK:2007}\label{lemma:exp-decay1}
Let $\chi_1$ and $\chi_2$ be two bounded functions with compact, disjoint supports in $\Lambda \subset \R^{Nd}$.
Then the operator $\chi_1 (H_{0,N}^\Lambda+ M)^{-1} \chi_2$ is in the trace class. Let $d_{12}$ denote
the distance between the supports of $\chi_1$ and $\chi_2$.
There exist finite, positive constants $A_{1,2}, \alpha >0$, depending on $H_{0,N}$ and $M$,
so that
\beq\label{eq:trace-norm1}
\| \chi_1 (H_{0,N}^\Lambda+ M)^{-1} \chi_2 \|_1 \leq A_{1,2} e^{- \alpha d_{12}} .
\eeq
The constant $A_{1,2}$ depends on ${\rm max}_{j=1,2} ~|{\rm supp} ~\chi_j |$.
\end{lemma}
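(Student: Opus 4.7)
The plan is to combine the Combes--Thomas analytic dilation, which supplies the factor $e^{-\alpha d_{12}}$, with an iterated geometric resolvent identity that generates enough powers of $R := (H_{0,N}^\Lambda + M)^{-1}$ to invoke Lemma \ref{lemma:trace-class1}(1). By writing $\chi_j = \chi_j\,\tilde\chi_j$ with $\tilde\chi_j \in C_0^\infty(\Lambda)$ equal to one on $\supp \chi_j$ and still having disjoint supports, I may assume throughout that the cutoffs are smooth, at the cost of a multiplicative factor $\|\chi_1\|_\infty \|\chi_2\|_\infty$ in the final estimate. Accordingly I rename $\tilde\chi_j$ as $\chi_j$ in what follows.

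For trace-class membership, first fix an integer $m > Nd/2$ and select a nested family of smooth cutoffs $\eta_0 = \chi_1, \eta_1, \dots, \eta_m \in C_0^\infty(\Lambda)$ with $\eta_{j+1} \equiv 1$ on $\supp \eta_j$ and every $\supp \eta_j$ staying at distance at least $d_{12}/2$ from $\supp \chi_2$. Iterating the identity $[\eta_j, R] = R[H_{0,N}, \eta_j] R$ together with the disjoint-support relation $\eta_j \chi_2 = 0$ produces the telescoping expansion
\begin{equation}\label{eq:plan-decomp}
\chi_1 R \chi_2 \;=\; (-1)^m\,\chi_1 R\,[\eta_1, H_{0,N}]\,R\,\cdots\,[\eta_m, H_{0,N}]\,R\,\chi_2,
\end{equation}
involving $m+1$ resolvents. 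Each commutator $[\eta_j, H_{0,N}]$ is a first-order differential operator with coefficients supported in $\supp \nabla \eta_j$; since $\nabla R^{1/2}$ and $R^{1/2} \nabla$ are bounded on $L^2(\Lambda)$ uniformly in $\Lambda$ (a quadratic-form estimate using $\|U\|_\infty < \infty$), each middle factor $R^{1/2} [\eta_j, H_{0,N}] R^{1/2}$ is uniformly bounded. Regrouping the half-powers of $R$ rewrites the right side of \eqref{eq:plan-decomp} as $\chi_1 R^{m+1} B_m$ with $B_m$ bounded independently of $\Lambda$, so Lemma \ref{lemma:trace-class1}(1) places $\chi_1 R^{m+1}$, and hence $\chi_1 R \chi_2$, in $\mathcal{I}_1$, with trace norm controlled by $|\supp \chi_1|$.

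To extract the exponential decay, I would rerun the derivation after analytic conjugation. Choose $F \in C^\infty(\R^{Nd})$ with $|\nabla F| \leq 1$, $F \equiv 0$ on a neighborhood of $\supp \chi_2$, and $F \equiv d_{12}$ on $\supp \chi_1 \cup \bigcup_j \supp \eta_j$. For $\alpha > 0$ sufficiently small, depending only on $M$ and $\|U\|_\infty$, the twisted operator $H_\alpha := e^{\alpha F}(H_{0,N} + M) e^{-\alpha F}$ is boundedly invertible on $L^2(\Lambda)$ with $R_\alpha := H_\alpha^{-1}$ satisfying the same uniform mapping bounds as $R$. Inserting $e^{\pm \alpha F}$ at the two ends of \eqref{eq:plan-decomp} and propagating them inward replaces every $R$ by $R_\alpha$, turns $\chi_1$ into $\chi_1 e^{-\alpha F}$ (whose $L^\infty$ norm is bounded by $e^{-\alpha d_{12}} \|\chi_1\|_\infty$), and leaves $\chi_2$ unchanged since $F \equiv 0$ on its support. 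Applying the trace-class estimate of the previous paragraph to the twisted expression yields $\|\chi_1 R \chi_2\|_1 \leq A_{1,2}\, e^{-\alpha d_{12}}$.

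The principal technical point will be the uniformity of all constants in $\Lambda$. For the Combes--Thomas step this amounts to the uniform lower bound $H_{0,N}^\Lambda + M \geq c I$ with $c > 0$ independent of $\Lambda$, which holds for $M > \|U\|_\infty$ and either Dirichlet or periodic boundary conditions. For the trace-class step, the scale-free character of Lemma \ref{lemma:trace-class1}(1) does the work. The combinatorics of the repeated commutator expansion is tedious but routine, and parallels the one-body analysis in \cite[Appendix A]{CHK:2007} that the present lemma is explicitly cited as extending.
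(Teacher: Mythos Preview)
The paper does not supply a proof of this lemma; it is quoted verbatim from \cite[Appendix~A]{CHK:2007}, and the only ``proof'' in the present paper is that citation. Your outline --- iterated geometric/commutator expansion to manufacture extra resolvents, combined with a Combes--Thomas boost to extract the factor $e^{-\alpha d_{12}}$ --- is exactly the mechanism used in \cite{CHK:2007}, so in that sense you are on the same route.

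There is, however, a genuine gap in your trace-class step. From the expansion
\[
\chi_1 R\chi_2 = (-1)^m\,\chi_1 R\,[\eta_1,H_{0,N}]\,R\cdots[\eta_m,H_{0,N}]\,R\,\chi_2
\]
and the boundedness of each $R^{1/2}[\eta_j,H_{0,N}]R^{1/2}$ you obtain
\[
\chi_1 R\chi_2 = \chi_1 R^{1/2}\,B_1\cdots B_m\,R^{1/2}\chi_2,
\]
\emph{not} $\chi_1 R^{m+1}B_m$. The resolvents are trapped between the commutators and cannot be ``regrouped'' to one side; as written you have only a single $R^{1/2}$ adjacent to $\chi_1$, and Lemma~\ref{lemma:trace-class1}(1) does not apply. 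The repair is standard but must be stated: insert cutoffs $\phi_j\equiv 1$ on $\supp\nabla\eta_j$ so that each block $\phi_j[\eta_j,H_{0,N}]R\phi_{j+1}$ is itself in a Schatten ideal $\mathcal I_q$ (this uses that $b\,\nabla(-\Delta+1)^{-1}\in\mathcal I_q$ for $q>Nd$ when $b$ is compactly supported, together with the bounded-potential comparison $(-\Delta+1)(H_{0,N}+M)^{-1}\in\mathcal B$), and then invoke H\"older for Schatten norms, $\|A_0A_1\cdots A_m\|_1\le\prod_j\|A_j\|_{p_j}$ with $\sum 1/p_j=1$. This forces you to take $m$ somewhat larger than $Nd/2$ (roughly $m\gtrsim Nd$), which is harmless. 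Once this is fixed, the Combes--Thomas overlay goes through as you describe, with the minor caveat that your Lipschitz bound on $F$ should be $|\nabla F|\le 2$ rather than $\le 1$, since $F$ must drop from $d_{12}$ to $0$ across a gap of width $\ge d_{12}/2$.
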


\subsection{First decomposition}\label{subsec:first-decomp1}

Following from \eqref{eq:upperbound1} and Lemma \ref{lemma:sp-ave1} we must
estimate the operator
\beq\label{eq:potential-est1}
\tilde{V}_\Lambda^{(N)} (H_{0,N}^\Lambda+ M)^{-1} \tilde{V}_\Lambda^{(N)} ,
\eeq
acting on the Hilbert space $L^2 ( \R^{Nd})$. The deterministic potential $\tilde{V}_\Lambda^{(N)}(x_1, \ldots,  x_N)$
is the sum of one-particle potentials $\tilde{V}_{\Lambda_L}^{(1)}  (x_i)$
defined as in \eqref{eq:hamiltonian2.1} with $\omega_j = 1$.
From \eqref{eq:hamiltonian2.1},
there are $N^2$-terms of the following form:
\beq\label{eq:pot-terms1}
 V_{\Lambda_i}^{(1)} (x_i) (H_{0,N}^\Lambda + M)^{-1} V_{\Lambda_j}^{(1)} (x_j) .
\eeq
We expand each one-particle random potential $V_{\Lambda_i}^{(1)} (x_i)$.
For economy of notation, we write $u_m(x_i)$ for $u(x_i - m)$, with $m \in \tilde{\Lambda}_{L_i}$.
In this manner, the term in \eqref{eq:pot-terms1} has the form
\beq\label{eq:pot-terms2}
\sum_{m \in \tilde{\Lambda}_i} \sum_{k \in \tilde{\Lambda}_j} u_m (x_i ) (H_{0,N}^\Lambda+ M)^{-1} u_k (x_j),
\eeq
for $i,j = 1, \ldots, N$.
We note the operator
\beq\label{eq:defn0}
K_{m_i,k_j} \equiv u_m (x_i ) (H_{0,N}^\Lambda+ M)^{-1} u_k (x_j),
\eeq
in \eqref{eq:pot-terms2} is trace class when the supports of the single-site potentials are
disjoint $u_m (x_i ) u_k (x_j ) = 0$, and we have exponential decay according to Lemma \ref{lemma:exp-decay1}. In this case, the constant $A_{1,2}$ in \eqref{eq:trace-norm1} grows like $|\Lambda|^{(N-1)d}$.
In general, when the supports are not disjoint, this operator is no longer in the trace class except for $d = 1$.


\subsection{Second decomposition}\label{subsec:second-decomp2}

We introduce a partitions of unity
for each $\Lambda_{L_j} \subset \R^d$, with $j=1, \ldots, N$.
Let $\chi_s^{(j)}(x_j)$ form a partition of unity for $\Lambda_{L_j} \subset \R^d$ with each function supported in a
translate of the unit cube $[0,1]^d \subset \Lambda_{L_j}$:
\beq\label{eq:pou1}
\sum_{s_j \in \tilde{\Lambda}_{L_j}} \chi_{s_j}^{(j)}(x_j) = \chi_{\Lambda_{L_j}}(x_j).
\eeq
 We then have for fixed $1 \leq i \leq N$:
\bea\label{eq:partition1}
u_m (x_i) | \Lambda &  =  &  \prod_{j\neq i; 1 \leq j \leq N} \chi_{\Lambda_{L_j}}(x_j) u_m(x_i) | \Lambda \nonumber \\
 & = & \prod_{j \neq i; 1 \leq j \leq N} \left( \sum_{s_j \in \tilde{\Lambda}_{L_j}} \chi_{s_j}^{(j)}(x_j) \right) u_m (x_i)  \nonumber \\
  & =& \sum_{s_1 \in \tilde{\Lambda}_{L_1}} \ldots \sum_{s_N \in \tilde{\Lambda}_{L_N}}  \chi_{s_1}^{(1)}(x_1) \ldots u_m(x_i) \ldots \chi_{s_N}^{(N)}(x_N)  ,
\eea
where we exclude from the sums a sum over the points $s_i \in \tilde{\Lambda}_{L_i}$. We note that $| \supp [ u_m(x_i) \otimes \Pi_{j \neq i} \chi_{s_j}^{(j)}(x_j) ]| / |\Lambda| \rightarrow 0$ as $| \Lambda| \rightarrow \infty$, where $| \cdot |$ denotes Lebesgue measure on $\R^{Nd}$. We write each summand on the last line in \eqref{eq:partition1}
as
\beq\label{eq:loc-vectors1}
\Phi_{s_1, \ldots, m_i, \ldots , s_N} (x_1, \ldots, x_N)   = \chi_{s_1}^{(1)}(x_1) \ldots u_m(x_i) \ldots \chi_{s_N}^{(N)}(x_N) .
\eeq
Inserting these partitions into \eqref{eq:pot-terms1} we obtain
\beq\label{eq:pot-terms3}
K_{m_i,k_j} =
\sum_{s_1,t_1 \in \tilde{\Lambda}_{L_1}} \ldots \sum_{s_N, t_N \in \tilde{\Lambda}_{L_N}}
\Phi_{s_1, \ldots, m_i, \ldots , s_N}   (H_{0,N}^\Lambda+ M)^{-1}     \Phi_{t_1, \ldots, k_j, \ldots , t_N} ,
\eeq
where, in the sums indexed by $s_k$ we omit $k = i$ and, similarly, in the sums indexed by $t_p$ we omit $p = j$.

For simplicity of notation, we make a definition. To distinguish the coordinates $x_i$ and $x_j$ and the various localization functions, we write $m_i \in \tilde{\Lambda}_{L_i}$ and $k_j \in \tilde{\Lambda}_{L_j}$, and define $N$-tuples ${s}(m_i)$ and ${t}(k_j)$, both in $\tilde{\Lambda} \subset \R^{Nd}$ with ${s}(m_i) = ( s_i, s_2, \ldots, s_{i-1}, m_i, s_{i+1}, \ldots, s_N)$ and similarly ${t}(k_j) = ( t_i, t_2, \ldots, t_{j-1}, k_j, t_{j+1}, \ldots, t_N)$. These $N$-tuples serve as the indices of the functions $\Phi$ in the partitions of unity. We then define:
\beq\label{eq:defn2}
K_{{s}(m_i)}^{{t}(k_j)} \equiv
\Phi_{s_1, \ldots, m_i, \ldots , s_N}   (H_{0,N}^\Lambda+ M)^{-1}     \Phi_{t_1, \ldots, k_j, \ldots , t_N} .
\eeq
This operator is related to $K_{m_i, k_j}$ in \eqref{eq:defn0} as follows. Let ${s}'(m_i) = ( s_i, s_2, \ldots, s_{i-1}, \hat{m_i}, s_{i+1}, \ldots, s_N)$ be the $N-1$-tuple obtained from $s(m_i)$ by omitting $m_i$, and similarly for $t'(k_j)$. We then have
\beq\label{eq:pot-terms4}
K_{m_i,k_j} = \sum_{s'(m_i), t'(k_j)}  K_{s(m_i)}^{t(k_j)}.
\eeq
Returning to the expectation of the last line on the right in \eqref{eq:upperbound1}, we obtain
\bea\label{eq:upperbound2}
\lefteqn{ \E \{ Tr E_\Lambda^{(N)} ( \Delta) \tilde{V}^{(N)}_\Lambda  (H_{0,N}^\Lambda + M )^{-1}
   \tilde{V}^{(N)}_\Lambda  \} } & & \nonumber \\
   & = &
   \sum_{i,j=1}^N \sum_{\stackrel{m_i \in \tilde{\Lambda}_{L_i}}{\scriptscriptstyle{k_j \in {\tilde \Lambda}_{L_j}}}}
\sum_{{s'}(m_i), {t'}(k_j) \in \tilde{\Lambda}} \E \{ Tr E_\Lambda^{(N)} ( \Delta) K_{{s}(m_i)}^{{t}(k_j)} \} .
\eea
In relation to Lemma \ref{lemma:sp-ave1}, the operators $K_{{s}(m_i)}^{{t}(k_j)}$ play the role of the $\Phi_i K \Phi_j$, although
there will be a subset of indices for which these operators will not be trace class without further manipulation.




\subsection{Trace norm estimates of the sum \eqref{eq:upperbound2}}\label{subsec:tr-norm-est1}

We designate the support of $\Phi_{s(m_i)}$ by $s(m_i)$. When the region $s(m_i)$ is disjoint from $t(k_j)$, the operator
$K_{s(m_i)}^{t(k_j)}$ is trace class according to Lemma \ref{lemma:trace-class1}
and satisfies an exponential decay estimate as in \eqref{eq:trace-norm1}.
For pairs of indices $(s(m_i), t(k_j))$ such that the supports of the corresponding functions $\Phi$
have disjoint support:
\bea\label{eq:tracebd11}
\| K_{s(m_i)}^{t(k_j)} \|_1 & \leq & C_0 e^{- c_0 \| s(m_i) - t(k_j) \|} \nonumber  \\
 & \leq & C_0 \prod_{\stackrel{\ell = 1, \ldots, N}{\scriptscriptstyle{\ell \neq i, j}}}
 e^{- c_0 \| s_\ell - t_\ell \|} ~e^{- c_0 (\| m_i - t_i \| + \| s_j - k_j\|)} ,
 \eea
for positive constants
$C_0, c_0 > 0$ depending on $N$ and $d$ but independent of $|\Lambda|$.

\subsubsection{Disjoint support terms}\label{subsubsec:disjoint1}

In order to sum over these pairs, we first fix $(m_i,k_j) \in \tilde{\Lambda}_i \times \tilde{\Lambda}_j$.
We decompose the sum over $(s'(m_i), t'(k_j))$ of $N-1$-tuples for which $(s(m_i), t(k_j))$
are disjoint.
The set of all such indices $(s'(m_i), t'(k_j))$ for which the distance to the pair $(m_i,k_j)$ is greater than than
twice the diameter of the support of  $u(x_i)$ will be called $\mathcal{I}_{(m_i,k_j)}$, and the complementary set is
$\mathcal{I}_{(m_i,k_j)}^c$. It is important to note that $| \mathcal{I}_{(m_i,k_j)}^c| \sim \mathcal{O}(1)$ compared to the volume $|\Lambda|$.
It follows from \eqref{eq:tracebd11} and Lemma \ref{lemma:sp-ave1} that
\beq\label{eq:sum-disjoint-supp1}
 \sum_{\stackrel{m_i \in \tilde{\Lambda}_{L_i}}{\scriptscriptstyle{k_j \in {\tilde \Lambda}_{L_j}}}}
\sum_{{s'}(m_i), {t'}(k_j) \in \mathcal{I}_{(m_i,k_j)}} \E \{ Tr E_\Lambda^{(N)} ( \Delta) K_{{s}(m_i)}^{{t}(k_j)} \}  \leq  8 C_{disj} ~s( | \Delta|) | \Lambda | .
\eeq
The constant $C_{disj}$ depends on the particle number $N$.

\subsubsection{Non-disjoint support terms}\label{subsubsec:nondisjoint1}

For the other terms for which the support is not disjoint, we have to follow the strategy of \cite[Appendix A]{CHK:2007}
based on the estimates in Lemma \ref{lemma:trace-class1}, and the fact that the number of such terms is bounded independently of $| \Lambda|$.
Since these terms are not in the trace class, we need to iterate the resolvent. We denote by $S(m; \sigma)$ the following sum
that occurs in the trace estimate of \cite[Appendix A]{CHK:2007}. For $\sigma_j > 0$ and $m+2 > \log (Nd) / \log 2$, we define:
\beq\label{eq:sum1}
S(m; \sigma) = \sum_{j=1}^m \frac{ \sigma_j}{2^j \sigma_0 \sigma_1 \ldots \sigma_{j-1}}, ~~\mbox{with} ~~\sigma_j > 0,  ~~\sigma_0 = 1,
\eeq
where we will choose $\sigma_j > 0$ below. Following \cite[(2.14)]{CHK:2007}, we define $K_\Lambda$ by
\beq\label{eq:close-sum1}
K_\Lambda = \sum_{\stackrel{m_i \in \tilde{\Lambda}_{L_i}}{\scriptscriptstyle{k_j \in {\tilde \Lambda}_{L_j}}}}
\sum_{{s'}(m_i), {t'}(k_j) \in \mathcal{I}^c_{(m_i,k_j)}} K_{{s}(m_i)}^{{t}(k_j)} .
\eeq
As in \cite[Appendix A.1]{CHK:2007}, we obtain
\bea\label{eq:sum-notdisjoint-supp1}
\lefteqn{ \sum_{\stackrel{m_i \in \tilde{\Lambda}_{L_i}}{\scriptscriptstyle{k_j \in {\tilde \Lambda}_{L_j}}}}
\sum_{{s'}(m_i), {t'}(k_j) \in \mathcal{I}^c_{(m_i,k_j)}} \E \{ Tr E_\Lambda^{(N)} ( \Delta) K_{{s}(m_i)}^{{t}(k_j)} \} } & & \nonumber \\
 & \leq & S(m, \sigma) \E \{ Tr E_\Lambda^{(N)} (\Delta) \} + \left( \frac{1}{2^m \sigma_1 \cdots \sigma_m} \right) \E \{ Tr
 E_\Lambda^{(N)}(  \Delta) (K_\Lambda K_\Lambda^*)^{2^{m-1}} \}. \nonumber \\
  & &
\eea
The constant $C_{ndisj}$ also depends on $N$.
The operator $( \tilde{K}_\Lambda \tilde{K}^*_\Lambda)^{2^{m-1}}$ is proved to be trace class on page 496 of \cite{CHK:2007}.
The second term in \eqref{eq:sum-notdisjoint-supp1} is estimated as in Appendix A of \cite{CHK:2007}.
\beq\label{eq:notdisjoint-supp2}
\E \{ Tr  E_\Lambda^{(N)}(  \Delta) (K_\Lambda K_\Lambda^*)^{2^{m-1}} \}
\leq 8 C_{ndisj} ~s(|\Delta|) |\Lambda| .
\eeq


\subsubsection{Completion of the proof}\label{subsubsec:completion1}

We combine the two upper bounds \eqref{eq:sum-disjoint-supp1} and \eqref{eq:notdisjoint-supp2} and sum over $i,j=1, \ldots, N$. Inserting this bound on the right side of the last line of \eqref{eq:upperbound1}, we obtain
\bea\label{eq:main1-0}
\lefteqn{ \E \{ Tr E_\Lambda^{(N)} ( \Delta) \} } & & \nonumber \\
 & \leq & [( E_0 +M) \gamma_N^{-4}  S(m; \sigma)] \E \{ Tr E_\Lambda^{(N)} ( \Delta ) \}  \nonumber \\
  & &  +  8  [( E_0 +M) \gamma_N^{-4} ](2^m \sigma_1 \cdots \sigma_m)^{-1} ( C_{disj} + C_{ndisj} ) s (|\Delta|) |\Lambda| .
 \nonumber \\
 & &
\eea
We now choose $\sigma_j$ so that the first term on the right in \eqref{eq:main1-0} can be moved to the left.
For any $B > 0$, let $\sigma_j = B^{- 2^{j-1}}, j = 1 , \ldots, m$, and $\sigma_0 = 1$.
We then have that $S(m; \sigma) = B^{-1}( 1- 2^{-m})$. If we take $B = (E_0 +M) \gamma_N^{-4}$, then the factor in square brackets
in the first term on the right in \eqref{eq:main1-0} is $(1 - 2^{-m}) < 1$, since $m \geq 1$ and finite.
Moving this term to the left in \eqref{eq:main1-0}, we obtain
\beq\label{eq:main1}
\E \{ Tr E_{\omega,N}^\Lambda ( \Delta) \} \leq C(E_0, d, N,u,U) s(  |  \Delta | ) |\Lambda | .
\eeq
This proves the Wegner estimate of Theorem \ref{thm:main1}.

The results on the IDS in Theorem \ref{cor:main2} follows from this
Wegner estimate and the fact that under the hypotheses of Theorem \ref{cor:main2}, we have $s(|\Delta|) \leq \| \rho \|_\infty | \Delta|$,
where $\rho > 0$ is the probability density.


\section{Two-region Wegner estimate}\label{sec:two-region1}

We prove two-region Wegner estimates for $H_{\omega,N}^\Lambda$ in the spirit of \cite{bcss1} and \cite{chulaevsky-suhov1}. Two-region Wegner estimates concern the eigenvalues of two local Hamiltonians $H_{\omega,N}^{\Lambda_j}$ associated with two regions $\Lambda_j$, for $j=1,2$.
Because of the possible dependance of the potentials $V_{\omega,N}^{\Lambda_j}$, for $j=1,2$, on each other even when the regions $\Lambda_1$ and $\Lambda_2$ are disjoint, we can prove this result only for pairs of regions $(\Lambda_1, \Lambda_2)$ that are called \emph{$R$-separated} in \cite{bcss1}. For example, for two particles in one dimension, the regions $\Lambda_1 = [0,1] \times [0,1]$ is disjoint from the region $\Lambda_2 = [0,1] \times [6,7]$. However, the projections of $\Lambda_1$ and $\Lambda_2$ on the first axis are the same so the local potentials $V_{\omega,2}^{\Lambda_j}$, for $j=1,2$, contain the same random variables associated with $[0,1]$. Clearly, one region sits in the shadow of the other. However, there are still a number of random variables in $\Lambda_1$ that are independent of those in $\Lambda_2$. We recall the definition of $R$-separated here that embraces this notion.

Let $\Pi_j : \R^{Nd} \rightarrow \R^d$ be the projection onto the $j^{\rm th}$-coordinate. For an $N$-particle rectangle $\Lambda \subset \R^{Nd}$,
the set $\Pi_j \Lambda$ is the $j^{\rm th}$-component rectangle of $\Lambda$ in $\R^d$.
We define $\Pi \Lambda$ be the subset of $\R^d$ defined by
\beq\label{eq:pi-lambda1}
\Pi \Lambda \equiv \bigcup_{j=1}^N \Pi_j \Lambda \subset \R^d.
\eeq
For a  subset $\mathcal{J} \subset \{ 1, \ldots, N \}$, we define
\beq\label{eq:pi-lambda2}
\Pi_{\mathcal{J}} \Lambda = \bigcup_{j \in \mathcal{J}} \Pi_j \Lambda \subset \R^d.
\eeq
We also need the notion of an extension of $\Lambda$ that takes into account the size of the support of the single site potential $u$.
Suppose that $\supp u \subset B(0,R)$, for some $R > 0$. We define the extension $\hat{\Lambda}$
to be the $N$-particle rectangle containing $\Lambda$
obtained by replacing $\Lambda_{L_j}$ by $\Lambda_{L_j + 2R}$ in each component rectangle.

\begin{definition}\label{defn:r-separated}
Two rectangles $\Lambda, \Lambda' \subset \R^{Nd}$ are $R$-separated if there exists a nonempty subset $\mathcal{J}\subset \{ 1, \ldots, N \}$
of indices so that either
$$
{\rm dist} \left[  \Pi_{\mathcal{J}} \hat{\Lambda}, \Pi_{\mathcal{J}^c} \hat{\Lambda} \cup \Pi \hat{\Lambda'} \right] > 2R,
$$
or
$$
{\rm dist} \left[ \Pi_{\mathcal{J}} \hat{\Lambda'}, \Pi_{\mathcal{J}^c} \hat{\Lambda'} \cup \Pi \hat{\Lambda} \right] > 2R,
$$
where the distance is the Euclidean distance.
\end{definition}

The notion of $R$-separation guarantees that there are random variables in one rectangle that are independent of the random variables in the second rectangle. In the example above, the second condition of Definition \ref{defn:r-separated} holds with $R=1$ and $\mathcal{J} = \{ 2 \}$.

\begin{theorem}\label{thm:two-region1}
Let $H_{\omega,N}^\Lambda$ be an $N$-particle Hamiltonian restricted to an $N$-cube $\Lambda$. Suppose that $\supp u \subset B(0,R)$, for some $R > 0$. Let $I_0 = (- \infty, E_0] \subset \R$ be an energy interval
for any $E_0 > 0$ fixed.
If $\Lambda$ and $\Lambda'$ are two $R$-separated $N$-particle cubes, then for any $\epsilon > 0$ sufficiently small,
\beq\label{eq:two-region1}
\P \{ {\rm dist}~( \sigma (H_{\omega,N}^\Lambda) \cap I_0, \sigma(H_{\omega,N}^{\Lambda'}) \cap I_0) < \epsilon \} \leq
C_W C(E_0, d, N, u, U) E_0^d |\Lambda'| |\Lambda| s(2 \epsilon ),
\eeq
where the constant $C(E_0, d, N, u, U) $ is defined in \eqref{eq:main1}.
\end{theorem}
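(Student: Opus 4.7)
My plan is to exploit the $R$-separation to identify a subset of random variables that enters $H_{\omega,N}^\Lambda$ but not $H_{\omega,N}^{\Lambda'}$, condition on the complementary variables, and then apply a ``partial randomness'' version of the one-volume Wegner estimate of Theorem \ref{thm:main1} to the conditional measure. By Definition \ref{defn:r-separated}, after swapping $\Lambda$ and $\Lambda'$ if necessary, there is a nonempty $\mathcal{J}\subset\{1,\ldots,N\}$ with
\[
\operatorname{dist}\bigl[\Pi_\mathcal{J}\hat\Lambda,\ \Pi_{\mathcal{J}^c}\hat\Lambda \cup \Pi\hat{\Lambda'}\bigr] > 2R.
\]
Let $S\subset\Z^d$ be the set of indices $j$ with $y_j+\supp u\subset\Pi_\mathcal{J}\hat\Lambda$. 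Since $\supp u\subset B(0,R)$, the separation hypothesis guarantees that the variables $\omega_S=\{\omega_j\}_{j\in S}$ enter $V_\omega^{(1)}(x_k)\upharpoonright\Lambda$ only through coordinates $k\in\mathcal{J}$ and do not appear in $H_{\omega,N}^{\Lambda'}$ at all. Let $\tilde\omega$ denote the remaining random variables; then $H_{\omega,N}^{\Lambda'}$ is a function of $\tilde\omega$ alone, and $(\omega_S,\tilde\omega)$ are independent.

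The event under consideration is contained in the union over eigenvalues $E_k\in\sigma(H_{\omega,N}^{\Lambda'})\cap I_0$ of $\{\sigma(H_{\omega,N}^\Lambda)\cap[E_k-\epsilon,E_k+\epsilon]\neq\emptyset\}$. Conditioning on $\tilde\omega$ and using a deterministic Weyl-type bound of the form $\#\bigl(\sigma(H_{\omega,N}^{\Lambda'})\cap I_0\bigr)\leq C_W E_0^{d}|\Lambda'|$, valid uniformly in $\tilde\omega$ because the potential is bounded, I obtain
\[
\P\bigl\{\operatorname{dist}(\sigma(H^\Lambda_{\omega,N})\cap I_0,\sigma(H^{\Lambda'}_{\omega,N})\cap I_0)<\epsilon\bigr\}\leq C_W E_0^{d}|\Lambda'|\sup_{\tilde\omega,E_k\in I_0}\P_{\omega_S}\bigl\{\sigma(H^\Lambda_{\omega,N})\cap[E_k-\epsilon,E_k+\epsilon]\neq\emptyset\mid\tilde\omega\bigr\}.
\]

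For each fixed $\tilde\omega$ and each $E_k$, the conditional probability is controlled by reproducing the argument of Section \ref{sec:wegnerNbody} while averaging only over $\omega_S$. Two inputs must be verified. First, the sfUCPSP lower bound \eqref{eq:ucp-spproj0} still holds after replacing $\tilde V_\Lambda^{(N)}$ by its $\mathcal{J}$-partial analogue $\sum_{i\in\mathcal{J}}\tilde V_\Lambda^{(1)}(x_i)$, since the pointwise computation \eqref{eq:lb-Npot1}--\eqref{eq:lb-Npot3} only requires the outer sum to run over a nonempty subset of coordinates; the prefactor $N\gamma_N^2$ degrades to $|\mathcal{J}|\gamma_N^2\geq\gamma_N^2>0$ and is absorbed into the constant $C(E_0,d,N,u,U)$. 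Second, Lemma \ref{lemma:sp-ave1} is formulated via the conditional Levy concentration \eqref{eq:levy1} and thus applies verbatim when only $\omega_S$-variables are integrated; for any localizer $\Phi_j$ with $j=(j_1,\ldots,j_N)$ supported inside $\Lambda$, a spectral-averaging variable can always be chosen to be $\omega_{j_k}$ for some $k\in\mathcal{J}$, which belongs to $S$. The remaining trace-class manipulations in Section \ref{sec:wegnerNbody} are purely deterministic. Together these give
\[
\P_{\omega_S}\bigl\{\sigma(H^\Lambda_{\omega,N})\cap[E_k-\epsilon,E_k+\epsilon]\neq\emptyset\mid\tilde\omega\bigr\}\leq C(E_0,d,N,u,U)\,s(2\epsilon)\,|\Lambda|,
\]
and combining with the Weyl bound yields the asserted two-region estimate.

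The main obstacle is Step~3, the partial-randomness upgrade of Theorem \ref{thm:main1}: one must check that averaging only the subset $\omega_S$ of random variables in the $\mathcal{J}$-shadow of $\Lambda$ still produces the correct volume dependence $s(|\Delta|)|\Lambda|$. The key point---already signalled parenthetically in the introduction after Theorem \ref{thm:main1}---is that Klein's sfUCPSP is ``scale-free in the particle index'' in the sense that single-site potentials along any one coordinate direction already provide a strictly positive lower bound on $E^\Lambda_{\omega,N}(I)\,\tilde V_\Lambda\,E^\Lambda_{\omega,N}(I)$, so that the whole machinery of Section \ref{sec:wegnerNbody} survives at the cost of a harmless factor $|\mathcal{J}|/N$ in the constants.
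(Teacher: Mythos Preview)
Your proposal is correct and follows essentially the same route as the paper's proof: condition on the random variables outside the $R$-separated piece, bound the number of eigenvalues of $H_{\omega,N}^{\Lambda'}$ in $I_0$ by Weyl's law, and then run the one-volume Wegner argument of Section~\ref{sec:wegnerNbody} using only the $\mathcal{J}$-coordinate random variables, relying on the fact that the pointwise chain \eqref{eq:lb-Npot1}--\eqref{eq:lb-Npot3} already gives a positive lower bound when the outer sum over $i$ is restricted to a single (or any nonempty subset of) coordinate(s). Your write-up is in fact more explicit than the paper's about the degradation $N\gamma_N^2\rightsquigarrow|\mathcal{J}|\gamma_N^2$ and about why the spectral-averaging variable in Lemma~\ref{lemma:sp-ave1} can always be taken from~$S$.
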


\begin{proof}
Since $\Lambda$ and $\Lambda'$ are partially separated cubes, we may assume without loss of generality, according to Definition \ref{defn:r-separated}, that there is a collection of sites associated with a one-particle cube $\Gamma \subset {\Lambda}$
so that the associated random variables are independent of those in $\Lambda'$. Any operator $H_{\omega,N}^{\Lambda}$ is uniformly lower semi-bounded and has only finitely-many eigenvalues in $I_0$.
As in \cite[page 564]{bcss1}, we then can compute
\bea\label{eq:cond-prob1}
\lefteqn{\P \{ {\rm dist}~( \sigma (H_{\omega,N}^\Lambda) \cap I_0, \sigma(H_{\omega,N}^{\Lambda'}) \cap I_0 < \epsilon \} }  \nonumber \\
 &=& \E_{\Gamma^c} [ \P_{\Gamma} \{ {\rm dist}~( \sigma (H_{\omega,N}^\Lambda) \cap I_0, \sigma(H_{\omega,N}^{\Lambda'}) \cap I_0)
 < \epsilon ~|~ \{ \omega_k\}_{k \in \Gamma^c} \} ] \nonumber \\
 & = & \E_{\Gamma^c} [ \P_{\Gamma} \{ {\rm min}_{j,k} | E_j(\Lambda) - E_k (\Lambda') | < \epsilon,~~ E_j(\Lambda), E_k (\Lambda') \in I_0 ~|~
\{ \omega_k\}_{k \in \Gamma^c} \} ] \nonumber \\
 & \leq & ( C_W E_0^d | \Lambda'| ) ~\E_{\Gamma^c} \left[ \sup_{E \in I_0} \P_{\Gamma} \{ {\rm dist} ~ ( \sigma (H_{\omega,N}^\Lambda) \cap I_0, E ) < \epsilon ~|~ \{ \omega_k\}_{k \in \Gamma^c} \} \right] \nonumber \\
  & \leq & C_W C(E_0, d, N, u, U) E_0^d |\Lambda'| |\Lambda| s(2 \epsilon ).
\eea
On the fourth line of \eqref{eq:cond-prob1}, we used Weyl's law with constant $C_W$ to bound the number of eigenvalues of $H_{\omega,N}^{\Lambda'}$ less than $E_0$. To calculate the probability with respect to the random variables in the one-particle cube $\Gamma$, we used the fact that we can obtain a lower bound in the calculation \eqref{eq:lb-Npot3} without the sum over the coordinate index $i$. We can then proceed with the proof as in section \ref{sec:wegnerNbody} for $\epsilon > 0$ small enough. This proves \eqref{eq:two-region1}.
\end{proof}

These two-region Wegner estimates are necessary for the multi-scale analysis proof of localization, see, for example, \cite[section 5]{klein-nguyen2}.
\section{Application to the Delone-Anderson model}\label{sec:delone1}

The techniques we have developed above
can be used to extend the results of Rojas-Molina and Veseli\'c \cite{RM-V1} and Klein \cite{klein1}
to the following non-ergodic model based on a one-particle random potential of Delone-Anderson type.
For any pair of finite positive constants $(m,M)$,, with $0 < m < M < \infty$, an $(m,M)$-Delone set
$\Gamma_{m,M}$ is a discrete subset $\{ z_j \} \subset \R^d$ having the property
that any cube of side length $m$ contains no more than one point, and any cube of side length $M$ contains at least one point. Hence the lattice
$M \Z^d$ contains at least one point of the Delone set $\Gamma_{m,M}$ in each cube of side length $M$. Consequently,
we can apply the above methods to the lattice $M \Z^d$. We decompose
the $(m,M)$-Delone set into two components $\Gamma_{m,M} = \Gamma_1 + \Gamma_2$, where $\Gamma_1$ contains exactly one point $y_j$ in $\Lambda_M(j)$.
Furthermore, these points are separated by a
distance at least $m$. We decompose the random potential into two pieces according to this decomposition of the set:
\beq\label{eq:delone1}
V_\omega (x) = \sum_{j \in \Z^d} u(x-z_j) = V_{\Gamma_1} (x)  + V_{\Gamma_2}(x).
\eeq
where
\beq\label{eq:delone2}
V_{\Gamma_1} (x) =  \sum_{j \in \Z^d} u(x-y_j) .
\eeq
The potential $V_{\Gamma_2}$ is bounded and independent of $V_{\Gamma_1}$ so we can add it to the potential $U$ without any loss of generality. We use the random variables associated with $\Gamma_1$ in the spectral averaging. In this way, and using his sfUCPSP, Klein obtained an improvement of
\cite[Theorem 4.2]{RM-V1}. For the $N$-body case, we obtain the analog of Theorem \ref{thm:main1} for the $N$-body Delone-Anderson random Schr\"odinger operator.



\begin{thebibliography}{HupLMW}
\frenchspacing

\bibitem{aw1} M.\ Aizenman, S.\ Warzel, {\it Localization bounds for multiparticle systems},  Commun.\ Math.\  Phys. {\bf 290}  (2009),  no. 3, 903–934.

\bibitem{bcss1} A.\ Boutet de Monvel, V.\ Chulaevsky, Y.\ Suhov, P.\ Stollmann, {\em Wegner-type bounds for a multi-particle continuous Anderson model with an alloy-type external potential}, J.\ Stat.\ Phys. {\bf 138}, 553--566 (2010).

\bibitem{cbs}  V.\ Chulaevsky,  A.\ Boutet de Monvel, Y.\ Suhov, {\it Dynamical localization for a m,ulti-particle model with an alloy-type external random potential}, Nonlinearity {\bf 24} (2011) 1451--1472.

\bibitem{chulaevsky-suhov1}  V.\ Chulaevsky, Y.\ Suhov, {\em Wegner bounds for a two-particle tight binding model}, Commun.\ Math.\ Phys. {\bf 283}, 479--489 (2008).

\bibitem{chulaevsky-suhov2}  V.\ Chulaevsky, Y.\ Suhov, {\em Eigenfunctions in a two-particle Anderson tight binding model}, Commun.\ Math.\ Phys. {\bf 289}, 701-723 (2009).

\bibitem{chulaevsky-suhov3}  V.\ Chulaevsky, Y.\ Suhov, {\em Multi-particle Anderson localization: Induction on the number of particles}, Math.\ Phys.\ Anal.\ Geom. {\bf 12}, 117--139 (2009).

\bibitem{CHK:2003} J.-M.\ {Combes}, P.\ D.\ {Hislop}, F.\ Klopp, {\em An optimal Wegner estimate and its application to the global continuity
    of the integrated density of states for random Schr\"odinger operators}, IMRN {\bf 2003} {No.\ 4}, {179--209}.

\bibitem{CHK:2007} J.-M.\ Combes, P.\ D.\ Hislop, F.\ Klopp, {\em An optimal Wegner estimate and its application to the global continuity
    of the integrated density of states for random Schr\"odinger operators}, {Duke Math.\ J.} {\bf 140}, {469--498} (2007).

\bibitem{[GK]} F.\ Germinet, A.\ Klein, {\em A comprehensive proof of localization for continuous Anderson models with singular random potentials},
 J.\ Eur.\ Math.\ Soc. {\bf 15} No.\ 1, 53--143 (2013).

\bibitem{klein1} A.\ Klein, {\it Unique continuation principle for spectral projections of Schr\"odinger operators and optimal Wegner estimates for non-ergodic random Schr\"odinger operators}, arXiv:1209.4863, to appear in Commun.\ Math.\ Phys.

\bibitem{klein-nguyen1}  A.\ Klein, S.\ T.\ Nguyen, {\em The bootstrap multiscale analysis for the multi-particle Anderson model}, J.\ Stat.\ Phys. {\bf 151} (2013), no.\ 5, 938–973. (arXiv:1212.5638v1)

\bibitem{klein-nguyen2}  A.\ Klein, S.\ T.\ Nguyen, {\em Bootstrap multiscale analysis and localization for the multi-particle continuous Anderson Hamiltonian}, preprint 2013.

\bibitem{klopp-zenk2009} F.\ Klopp, H.\ Zenk, \emph{The integrated density of states for an interacting multiparticle homogeneous model and applications to the Anderson model}, Adv.\ Math.\ Phys. \textbf{2009}, article ID 679827 (2009).

\bibitem{Kirsch2007} W.\ {Kirsch}, {\em A Wegner estimate for multi-particle random Hamiltonians}, Zh.\ Mat.\ Fiz.\ Anal.\ Geom. {\bf 4} No.\ 1,
121--127 (2008).

\bibitem{RM-V1} C.\ Rojas-Molina, I.\ Veseli\'c, {\it Scale-free unique continuation estimates and applications to random Schr\"odinger operators},
  arXiv:1210.5623,  Commun.\ Math.\ Phys. {\bf 320}  (2013),  no.\ 1, 245--274.

\end{thebibliography}
\end{document}